\documentclass[twocolumn]{autart}    
\usepackage{graphicx}
\usepackage{amsmath,amsfonts,amssymb}
\usepackage{romannum}
\usepackage{comment}
\usepackage{subcaption}
\newtheorem{lemma}{\textbf{Lemma}}  
\newtheorem{remark}{\textbf{Remark}}
\newtheorem{assumption}{\textbf{Assumption}}

\newtheorem{corollary}{\textbf{Corollary}}
\newtheorem{Proposition}{\textbf{Proposition}}
\newtheorem{problemx}{\textbf{Problem}}
\newenvironment{problem}{%
  \problemx
}{\endproblemx}
\newcommand{\sgn}{\operatorname{sgn}}

\newcommand{\dist}{\operatorname{dist}}

\newenvironment{proof}{{\emph{\textbf{Proof:}} }}{\hfill $\square$}
\begin{document}

\begin{frontmatter}

\title{Bearing-based Circumnavigation with Collision Avoidance in Time-varying Graphs under Limited Target Information\thanksref{footnoteinfo}} 
\thanks[footnoteinfo]{This paper was not supported by any organisation.}

\author[inst1]{Kushal Pratap Singh}\ead{kushalp20@iitk.ac.in},
\author[inst1]{Twinkle Tripathy}\ead{ttripathy@iitk.ac.in},
\author[inst2]{Anoop Jain}\ead{anoopj@iitk.ac.in}

\address[inst1]{Department of Electrical Engineering, Indian Institute of Technology Kanpur, Kanpur-208016, India} 
\address[inst2]{Department of Electrical Engineering, Indian Institute of Technology Jodhpur, Jodhpur-342030, India}


\begin{keyword}                           
Distributed control, circumnavigation, collision avoidance, barrier Lyapunov functions, time-varying graphs.                 
\end{keyword}                             

\begin{abstract}        
In this paper, we study distributed circumnavigation of a stationary target by a heterogeneous team of agents. Each agent is modelled as a disk rather than a point mass to account for its physical dimensions. The target location is assumed to be accessible only to a small subset of agents, called leaders. The rest, called followers, therefore use only local information available from their designated out-neighbour in the interaction graph characterised by the selection of nearest neighbours. By controlling only angular speeds, we develop a distributed guidance law to circumnavigate a stationary target. The proposed guidance law works for both static and time-varying interaction graphs. Inter-agent collision avoidance is enforced through a logarithmic Barrier Lyapunov (BLF) Function, which guarantees forward invariance of the collision-free set. We show that every follower converges to circumnavigation about the same target as the leader at the end of its directed path in the interaction graph, provided the initial conditions are admissible. Numerical simulations illustrate the effectiveness of the proposed method for both static and time-varying topologies.
\end{abstract}

\end{frontmatter}

\section{Introduction}
\label{Section:Introduction}
The problem of bearing-based circumnavigation pertains to steering one or more agents to move on circular paths around a target using bearing (angle) measurements. It has emerged as a critical task in a wide range of application domains, including boundary surveillance, asset protection, border patrol, search assist, hazardous rescue missions etc.~\cite{matveev2011method,fu2023justification,cooper2020optimal,leonard2007collective,lytridis2021overview,ahmadzadeh2006multi,girard2004border,quigley2005towards}. 

A significant portion of the existing literature, such as in~\cite{2022_single,sinha2023}, has focused on a single agent performing circumnavigation of a stationary target. While these approaches provide valuable insights into the fundamental mechanics of circumnavigation, they do not directly address the coordination challenges that arise in multi-agent settings. The increasing complexity of large-scale systems necessitates the transition from single-agent frameworks to cooperative control strategies. Such approaches offer several advantages, including improved scalability, robustness against individual-agent failures, and reduced communication overhead, thereby attracting considerable interest from researchers worldwide. Early contributions in this domain, such as~\cite{all_to_all_communication,limited_communication,marshall2004formations}, focused on the design of control laws for groups of unicycle agents operating with identical constant linear speeds, which was later generalised for different linear speeds in~\cite{sinha2007generalization}. Additionally, authors in \cite{TRIPATHY2024111315} present a related approach for agents with identical constant speeds, but the agents converge from almost all initial conditions.

Building on these cooperative frameworks, subsequent research has explored a broader range of formation objectives and control methodologies. For instance, circular, elliptical and spiral formations~\cite{ramirez2010distributed}, dynamic unicycles circumnavigating a location based on initial conditions~\cite{el2012distributed}, demonstration of distinct classes of circular motion~\cite{seyboth2014collective} and modulation of both linear and angular speeds~\cite{zheng2015distributed}.
In practice, circumnavigation laws must explicitly account for safety constraints induced by the finite size of the agents and their motion. BLF-based methods provide a natural way to incorporate such state constraints into the stability analysis. Existing BLF constructions include re-centred barrier functions~\cite{panagou2015distributed}, parametric barrier formulations~\cite{han2019robust}, universal barrier functions~\cite{jin2021multirobot} etc. In this paper, we adopt a logarithmic BLF~\cite{tee2009barrier} because it leads to a tractable distributed design and permits a direct proof of forward invariance of the collision-free set.

In addition to physical constraints, another fundamental challenge arises from limited information availability within the network. In many practical scenarios, access to the target location is restricted to only a small subset of the agents. Under such conditions, the problem formulation depends critically on the underlying communication topology, for instance, directed graphs possessing a spanning tree or cyclic interaction structures. Collectively, this highlights the need for distributed, information-efficient, and safety-critical control strategies for cooperative circumnavigation. Moreover, in practice, interaction topologies often evolve over time due to factors such as component failures, obstacles, and limited fields of view~\cite{11004631}. To the best of our knowledge, a unified framework that simultaneously addresses bearing-only sensing, heterogeneous agents, collision avoidance, limited target information and time-varying interaction topology remains unexplored.  

Motivated by these considerations, in this paper, we investigate cooperative circumnavigation for a group of heterogeneous unicycle agents with fixed safety radii under limited information availability and time-varying interaction topology. The idea of considering the fixed safety radii around agents to account for their physical dimensions is inspired by the work presented in~\cite{chan2020angle}. The proposed approach employs a bearing-based distributed guidance strategy, wherein follower agents rely solely on local angular measurements and information from a single out-neighbour, while leaders utilise target-bearing information. A BLF is incorporated to guarantee collision avoidance and ensure forward invariance of the safe set.
Furthermore, we consider the case of a time-varying interaction topology based on selecting the nearest out-neighbour in the interaction graph. The main contributions of this paper are as follows:
\begin{enumerate}
    \item \textit{Asymptotic convergence with safety guarantees:} By regulating only angular speeds, we develop a distributed guidance law that achieves circumnavigation of a stationary target. Collision avoidance is guaranteed through a logarithmic BLF. For static interaction graphs, we establish forward invariance of the collision-free set and asymptotic convergence of the error dynamics for all admissible initial conditions.
    \item \textit{Equilibrium states based formation control:} We show that when information can flow through the interaction graph. In particular, when every follower has a directed path to a leader, the shape of the formation can be controlled by selecting different equilibrium states within the admissible set.
    \item \textit{Applicability to time-varying graphs:} Along with static interaction graphs, we extend the analysis to piecewise-static time-varying graphs with node-entry and node-exit events.
    \item \textit{Bearing-based distributed guidance law:} The follower law requires no range measurements and no state memory, which makes the design suitable for practical and distributed implementation.
\end{enumerate}

The paper is outlined as follows. Section \ref{Section:preliminaries} introduces the essential background and preliminary concepts. In Section \ref{Section:Problem_Formulation}, the main problem is formally defined. The proposed guidance law is detailed in Section \ref{section:Guidance law}. Simulation results supporting the theoretical developments are provided in Section \ref{Sec:Simulaton}. Lastly, Section \ref{Sec:Conclusion} offers concluding remarks and outlines possible directions for future work.
\section{Preliminaries}
\label{Section:preliminaries}
\textit{Notations:} Let $\mathbb{R}$ and $\mathbb{N}$ denote the set of real numbers and natural numbers, respectively. $\mathbb{R}^+$ denotes the set of positive real numbers and $\mathbb{R}^n$ denotes a real valued vector of size $n \in \mathbb{N}$. $||\bullet||$ denotes the two-norm of a vector $\mathbb{R}^n$. $\iota$ denotes the imaginary number.

\textit{Graph theory:} A directed graph (digraph) $\mathcal{G}=(\mathcal{V},\mathcal{E})$, where $\mathcal{V}=\{1,2,\dots,n\}$ is the set of nodes (agents) and $\mathcal{E}\subseteq \mathcal{V}\times \mathcal{V}$ is the set of edges representing interaction.
A directed edge $(i,j)\in\mathcal{E}$ from $i$ to $j$ implies that agent $i$ accesses information from agent $j$. Then, $j$ becomes an out-neighbour of $i$, and $i$ becomes an in-neighbour of $j$. $\mathcal{N}^{out}_i$ is the set containing all the out-neighbours of agent $i$. In this framework, the edge $(i,j)\in\mathcal{E}$ implies that agent $i$ can sense the heading angle of agent $j$ and LOS angle from $i$ to $j$.
A node with no outgoing edges is called a \textit{sink node}. A time-varying graph $\mathcal{G}(t)=(\mathcal{V}(t),\mathcal{E}(t))$ is one where the number of nodes and edges can change with time. 
%
%

\textit{Barrier Lyapunov Function (BLF):} A BLF is a scalar function $V(x)$, defined with respect to the system $\dot{x} = f(x)$ on an open region $D$ containing the origin, such that it is continuous, positive definite, and has continuous first-order partial derivatives at every point of $D$. Moreover, it satisfies $V(x) \to \infty$ as $x$ approaches the boundary of $D$, and along the solution of $\dot{x} = f(x)$ for $x(0) \in D$, it holds that $V(x(t)) \leq b$ for all $t \geq 0$, for some positive constant $b$.
\begin{lemma}[\cite{tee2009barrier}]
\label{Lemma:BLF}
For any positive constants $k_{a_1}$ and $k_{b_1}$, let $Z_1 := \left\{ z_1 \in \mathbb{R} \;:\; -k_{a_1} < z_1 < k_{b_1} \right\} \subset \mathbb{R},$ and $\mathcal{N} := \mathbb{R}^l \times Z_1 \subset \mathbb{R}^{l+1}$ be open sets. Consider the system
\setlength{\abovedisplayskip}{2pt}
\setlength{\belowdisplayskip}{2pt}
\begin{equation}
\dot{\eta} = h(t,\eta)
\end{equation}
where $\eta := [w, z_1]^T \in \mathcal{N}$, and 
$h : \mathbb{R}_+ \times \mathcal{N} \to \mathbb{R}^{l+1}$ is piecewise continuous in $t$ and locally Lipschitz in $z$, uniformly in $t$, on $\mathbb{R}_+ \times \mathcal{N}$. Suppose that there exist functions $U : \mathbb{R}^l \to \mathbb{R}_+$ and $V_1 : Z_1 \to \mathbb{R}_+$, continuously differentiable and positive definite in their respective domains, such that
\begin{equation}
V_1(z_1) \to \infty \quad \text{as} \quad z_1 \to -k_{a_1} \;\text{or}\; z_1 \to k_{b_1},
\end{equation}
\begin{equation}
\gamma_1(\|w\|) \leq U(w) \leq \gamma_2(\|w\|),
\end{equation}
where $\gamma_1$ and $\gamma_2$ are class $\mathcal{K}_\infty$ functions. Let $V(\eta) := V_1(z_1) + U(w),$ and $z_1(0)$ belong to the set $z_1 \in (-k_{a_1}, k_{b_1})$. If the following inequality holds:
\begin{equation}
\dot{V} = \frac{\partial V}{\partial \eta} h(t,\eta) \leq 0,
\end{equation}
then $z_1(t)$ remains in the open set $(-k_{a_1}, k_{b_1})$ for all $t \in [0,\infty)$.
\end{lemma}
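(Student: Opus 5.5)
The plan is the standard contradiction argument for barrier Lyapunov functions, exploiting the fact that $V$ cannot increase along trajectories while $V_1$ blows up at the boundary of $Z_1$.

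\emph{Step 1: local solution.} Since $h$ is piecewise continuous in $t$ and locally Lipschitz in $\eta$ on $\mathbb{R}_+\times\mathcal{N}$, and $\eta(0)\in\mathcal{N}$ with $z_1(0)\in(-k_{a_1},k_{b_1})$, standard existence and uniqueness results give a unique maximal solution $\eta(t)\in\mathcal{N}$ on some interval $[0,\tau_{\max})$. Along this solution $V$ is continuously differentiable and $\dot V\le 0$. Integrating gives
\begin{equation*}
V_1(z_1(t))+U(w(t))\le V(\eta(0))=:V_0<\infty \qquad \text{for all } t\in[0,\tau_{\max}).
\end{equation*}

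\emph{Step 2: uniform confinement.} Since both $V_1$ and $U$ are nonnegative, the bound above yields $U(w(t))\le V_0$. By $\gamma_1(\|w\|)\le U(w)$ with $\gamma_1\in\mathcal{K}_\infty$, this gives $\|w(t)\|\le\gamma_1^{-1}(V_0)$. Likewise $V_1(z_1(t))\le V_0$. Because $V_1$ is continuous on $Z_1$ and tends to $\infty$ at both endpoints, the sublevel set $\{z_1\in Z_1: V_1(z_1)\le V_0\}$ is a compact interval $[-\underline{k},\bar{k}]$ with $\underline{k}<k_{a_1}$ and $\bar{k}<k_{b_1}$. Hence the trajectory stays in the compact set
\begin{equation*}
K:=\{\|w\|\le\gamma_1^{-1}(V_0)\}\times[-\underline{k},\bar{k}]\subset\mathcal{N}.
\end{equation*}

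\emph{Step 3: extension to all time.} A maximal solution of a system that is locally Lipschitz and piecewise continuous in $t$ must leave every compact subset of the domain if $\tau_{\max}<\infty$. This is the usual extendability theorem, e.g.\ Khalil's Theorem 3.3. Since the trajectory is confined to $K$, we get $\tau_{\max}=\infty$. Therefore $z_1(t)\in[-\underline{k},\bar{k}]\subset(-k_{a_1},k_{b_1})$ for all $t\ge 0$.

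The main obstacle is the subtlety in Step 3. The inequality $\dot V\le0$ is only meaningful while the solution remains in $\mathcal{N}$, so the argument must avoid circularity. The way around it is to work with the maximal interval of existence inside the open set $\mathcal{N}$ and to use the compactness of the sublevel sets, guaranteed by the barrier property of $V_1$ and the radial unboundedness of $U$, to exclude both finite escape and approach to the boundary.
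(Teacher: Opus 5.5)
Your proof is correct and follows the same route as the original argument of Tee et al.; the paper itself quotes this lemma without proof. That argument takes a unique maximal solution, uses $\dot V\le 0$ to bound $V$ by $V(\eta(0))$, confines $\eta$ to a compact subset of $\mathcal N$ via the barrier property of $V_1$ and $\gamma_1\in\mathcal K_\infty$, and concludes $\tau_{\max}=\infty$ from the standard extension theorem. One small point: since $V_1$ is not assumed monotone on either side of $0$, the sublevel set $\{z_1\in Z_1: V_1(z_1)\le V_0\}$ need not itself be an interval, but it is contained in a compact interval $[-\underline{k},\bar k]\subset Z_1$, which is all Step 3 needs.
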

\section{Problem formulation}
\label{Section:Problem_Formulation}
\begin{figure}[ht]
    \centering
        \centering    
        \includegraphics[width=0.6\linewidth]{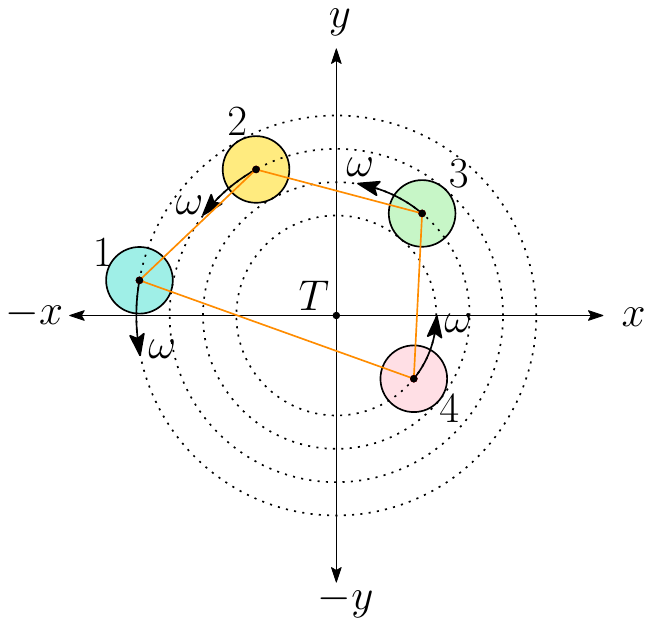}
    \caption{Desired circumnavigation}
    \label{figure:desired_formation}
\end{figure}
Circumnavigation, in which multiple agents maintain circular trajectories around a central target (see Fig. \ref{figure:desired_formation}), plays a crucial role in many operational scenarios. These include surveillance, reconnaissance, environmental monitoring, and coordinated multi-robot firefighting.  In these applications, the stationary target denotes any landmark, beacon or any region of interest. Motivated by these practical applications, we investigate this problem for a group of $n$  heterogeneous autonomous agents. To enhance the real-world applicability, we also aim to avoid inter-agent collisions.

%
In our framework, every agent $j\in \mathcal{V}$ is characterised by a distinct constant linear speed $v_j \in \mathbb{R}^+$, initial heading angle $\gamma_j \in \mathbb{S}^1$, initial position $P_j(x_j,y_j)$, and a prescribed safety radius $R_s$ to avoid inter-agent collisions (see Fig. \ref{figure:unicycle_model}). Collision avoidance between any two agents $i,j \in \mathcal{V}$, $i \neq j$, is guaranteed by enforcing the inter-agent distance constraint $\|P_i - P_j\| \geq 2R_s,$ $\forall i \neq j$. This ensures that the agents' safety regions do not overlap at any time. The kinematics of the $j^{th}$ disc's centre is:
\setlength{\abovedisplayskip}{2pt}
\setlength{\belowdisplayskip}{2pt}
\begin{equation}
\label{eq:unicycle_kinematics}
\dot{x_j} = v_j \cos \gamma_j, \quad
\dot{y_j} = v_j \sin \gamma_j, \quad
\dot{\gamma}_j = u_j,
\end{equation}
where $u_j$ is the control input governing its angular speed.
\begin{figure}[ht]
\begin{center}
\includegraphics[scale=0.9]{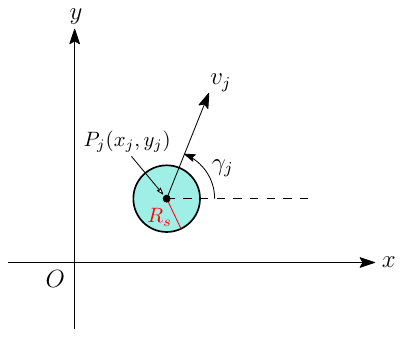}    
\caption{Unicycle with fixed safety radius $R_s$}  
\label{figure:unicycle_model}                                 
\end{center}                                 
\end{figure}

Considering heterogeneity in agent's sensing capability, we work under the paradigm that not all of the agents have access to the target information. Accordingly, they are partitioned into two disjoint sets: leaders $(\mathcal{L})$, who know the target coordinates, and followers $(\mathcal{F})$, who do not. Although the unicycle model is intrinsically underactuated, we introduce an additional constraint to improve practical implementability by fixing the forward speed $v_j$ and applying control exclusively through the angular input $u_j$.
As discussed before, the objective is to design angular-speed law that guarantees asymptotic circumnavigation while avoiding inter-agent collisions. An example of the final desired formation is illustrated in Fig.~\ref{figure:desired_formation}.
We now formally write the problem statement.

\begin{problem}
\label{prob:main}
Consider a group of heterogeneous unicycle agents governed by eqn.~\eqref{eq:unicycle_kinematics} and interacting over a directed graph. Let $\mathcal{L}$ and $\mathcal{F}$ denote the sets of leaders and followers, respectively. Define the admissible set $\mathcal{X}_{\mathrm{adm}} \triangleq \left\{ (P_1,\dots,P_n,\gamma_1,\dots,\gamma_n)\;:\; \|P_i-P_k\| > 2R_s,\ \forall i \neq k \right\}$.

The objective is to design distributed angular speed control inputs $u_i$ for all agents $i \in \mathcal{V}$ satisfying the following requirements. For each leader agent $i \in \mathcal{L}$, design a guidance law using the available target information such that the agent converges to its prescribed circular trajectory in finite time.

For each follower agent $i \in \mathcal{F}$, design a control law using only the heading angle of its designated out-neighbour and the corresponding line-of-sight (LOS) bearing information, such that, for every initial condition in $\mathcal{X}_{\mathrm{adm}}$, the following properties hold:
\begin{enumerate}
    \item each agent asymptotically converges to its prescribed circular orbit, i.e., $\lim_{t\to\infty} |r_i(t)-R_i| = 0, \forall i \in \mathcal{V}$,
    \item the admissible set $\mathcal{X}_{\mathrm{adm}}$ is forward invariant,
    \item each follower asymptotically synchronises its angular speed with that of its out-neighbour, i.e., $\lim_{t \to \infty} \big| \dot{\gamma}_i(t) - \dot{\gamma}_{\mathcal{N}^{out}_i}(t) \big| = 0, \quad \forall i \in \mathcal{F},$
    where $\mathcal{N}^{out}_i$ denotes the out-neighbour of agent $i$ in the interaction graph.
\end{enumerate}
\end{problem}
\section{Main results}
\label{section:Guidance law}
In the considered problem, the number of leader agents is significantly smaller, and each leader has access to the target information. Consequently, achieving circumnavigation and inter-agent collision avoidance for the leaders is comparatively less challenging than for the followers. Therefore, we first develop the guidance law for the leader agents before addressing the follower dynamics. Furthermore, since inter-agent collision avoidance is achieved solely using bearing information, we begin by presenting the corresponding collision avoidance methodology.

\subsection{Collision avoidance}
In this subsection, we describe how to avoid inter-agent collisions using only bearing information. Without loss of any generality, we assume that all the agents have identical radii $R_s$. Each agent is assumed to be equipped with a sensing mechanism of $360^\circ$ located at its geometric centre. This sensing capability enables an agent to measure the angular width subtended by another agent at its own centre. For practical reasons, this angular width can be measured up to a finite sensing region around each agent, and the finite region does not need to be the same for every agent. Specifically, as illustrated in Fig.~\ref{fig:sub_angle}, agent $j$ happens to be in the sensing region of agent $i$, and it measures the angle $\alpha_{ij}$ subtended by agent $j$ at $P_i$.
\begin{figure}[ht]
    \centering
    \includegraphics[scale=0.95]{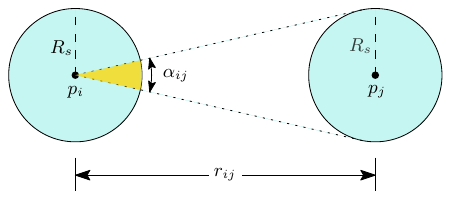}
    \caption{Sensing setup with agent $i$ observing agent $j$}
    \label{fig:sub_angle}
\end{figure}

\setlength{\abovedisplayskip}{2pt}
\setlength{\belowdisplayskip}{2pt}
From geometry, the following relation holds:
\begin{equation}
    \label{eq:alpha_ij}
    r_{ij} = {R_s}/({\sin\!\left({\alpha_{ij}}/{2})\right)},
\end{equation}
and the inter-agent distance can be calculated using $\alpha_{ij}$.

As each agent is modelled as a disk of radius $R_s$, to avoid collision, the admissible inter-agent distance must satisfy $r_{ij} > 2R_s$. Substituting this condition into eqn.~\eqref{eq:alpha_ij} yields the corresponding angular constraint $\alpha_{ij} < \pi/3$. Therefore, collision avoidance can be guaranteed by enforcing $\alpha_{ij} < \pi/3$ at all times. Also, since the proposed framework involves cooperative multi-agent interactions, an appropriate information-exchange mechanism among the agents is required to facilitate both coordination and collision avoidance. To this end, we introduce the following nearest neighbour based interaction topology.
\subsection{Nearest neighbour based interaction topology}
\begin{figure}[ht]
    \centering
        \begin{subfigure}[b]{0.2\textwidth}
        \centering    
        \includegraphics[width=\linewidth]{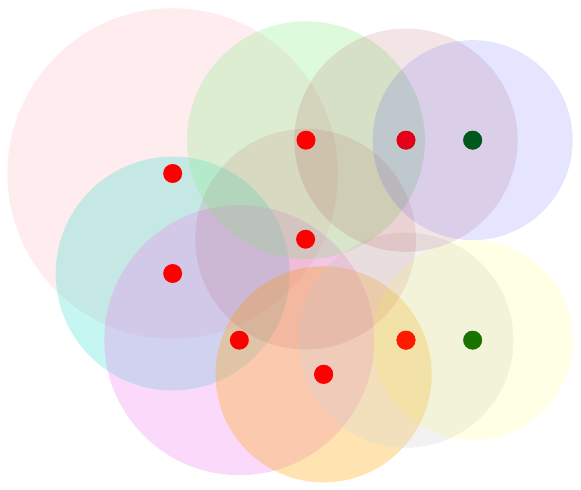}
        \caption{Distribution of agents in  $\mathbb{R}^2$}
        \label{fig:sensing_graph}
    \end{subfigure}
    \hfill
    \begin{subfigure}[b]{0.24\textwidth}
        \centering
        \includegraphics[width=0.87\linewidth]{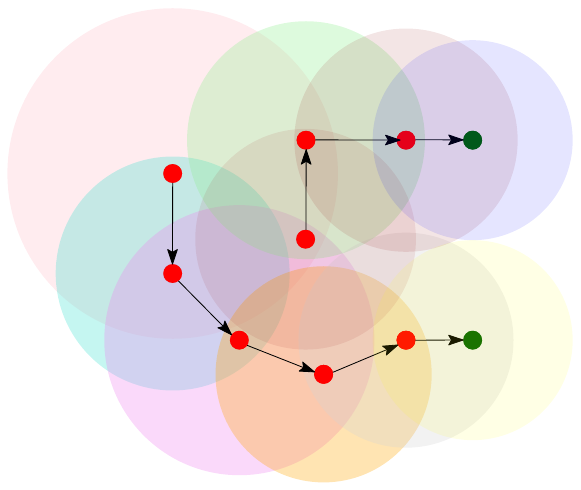}  
        \caption{Nearest neighbour based interaction topology $\mathcal{G}$}
        \label{fig:sense_to_communication_graph}
    \end{subfigure}
    \caption{Leader and follower are shown by red and green nodes, respectively}
    \label{fig:to_communication}
\end{figure}
For each agent $i \in \mathcal{V}$, let $\mathcal{S}_i^{out}$ denote the set of agents located within the sensing region of agent $i$. As illustrated in Fig.~\ref{fig:to_communication}, the shaded circular regions surrounding the agents represent the domains within which an agent can measure the angle subtended by neighbouring agents at its centre. This information is subsequently utilised in the guidance law for inter-agent collision avoidance. In addition, each agent is capable of measuring the heading angles of neighbouring agents within its sensing region, as well as the LOS angle between them.

The interaction graph $\mathcal{G}=(\mathcal{V},\mathcal{E})$ is constructed by assigning exactly one outgoing edge $(i,j)\in\mathcal{E}$ to every agent, where agent $j$ corresponds to the nearest neighbour in $\mathcal{S}_i^{out}$, as depicted in Fig.~\ref{fig:sense_to_communication_graph}. The nearest neighbour is identified using the relation in eqn.~\eqref{eq:alpha_ij}, according to which a larger subtended angle at the agent centre corresponds to a smaller inter-agent distance. In situations involving multiple equidistant candidates, one agent is selected arbitrarily. 

Under this construction, every follower has exactly one designated out-neighbour, while a given agent may simultaneously serve as the out-neighbour of several followers. The out-neighbour of each agent is determined according to a nearest-neighbour rule based on inter-agent distances. As these distances evolve with the agents' motion, the identity of the selected neighbour may change over time, leading to either a static or a time-varying interaction graph. For both graph types, in order for information to travel throughout the group, we make the following assumption.
\begin{assumption}
    \label{assump:infor_flow}
    For the interaction graph $\mathcal{G}(t)$, we assume that every follower $f \in \mathcal{F}$ has a directed path to at least one leader in the set $\mathcal{L}$.
\end{assumption}

Within the interaction graph, the motion of the leader agents evolves independently of information received from neighbouring followers within their sensing regions. However, when two leaders mutually lie within each other’s sensing regions, their interaction must be explicitly taken into account. Furthermore, since the leaders know the target information, it is easier for them to incorporate both target circumnavigation and inter-leader collision avoidance. Accordingly, we now develop the circumnavigation guidance law for the leaders.
\subsection{Circumnavigation for leaders}
\label{subsec:guidance_leaders}
\begin{figure}[ht]
\begin{center}
\includegraphics[scale=0.8]{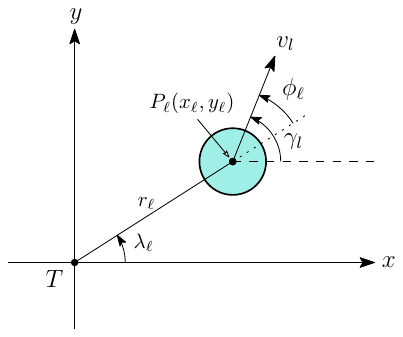}    
\caption{Engagement geometry of a leader}  
\label{figure:engagement_geometry_leader}                                 
\end{center}                                 
\end{figure}

For each leader $\ell\in\mathcal{L}$, define $P_\ell=[x_\ell \quad y_\ell]^T$, $r_\ell=\|P_\ell\|$, $\lambda_\ell=\arctan(y_\ell,x_\ell)$ and $\phi_\ell=(\gamma_\ell-\lambda_\ell)$. The angle $\phi_\ell$ is the lead angle of the leader measured from the outward radial direction as shown in Fig.~\ref{figure:engagement_geometry_leader}. The polar kinematics of the leader are
\begin{equation}
\label{eq:leader_polar_kinematics}
\dot r_\ell=v_\ell\cos\phi_\ell,\quad
\dot\lambda_\ell=\frac{v_\ell}{r_\ell}\sin\phi_\ell, \quad
\dot\phi_\ell=u_\ell-\frac{v_\ell}{r_\ell}\sin\phi_\ell.
\end{equation}
Let $R_\ell>0$ be the desired leader radius and define $\varepsilon_\ell=r_\ell-R_\ell$
Choose constants $k_{r\ell}>0$, $k_{\phi\ell}>0$, $0<\alpha_\ell<1$, $0<\beta_\ell<1$ and let $\sigma_\ell\in\{+1,-1\}$ denote the desired direction of rotation ($+1$ for anticlockwise motion and $-1$ for clockwise motion).
The desired radial velocity is selected as $w_\ell(\varepsilon_\ell)
=
-k_{r\ell}|\varepsilon_\ell|^{\alpha_\ell}\sgn(\varepsilon_\ell)$.
Assume that the desired radial velocity is admissible on the leader operating
set, i.e.,
\begin{equation}
\label{eq:leader_admissible_radial_speed}
|w_\ell(\varepsilon_\ell)|<v_\ell .
\end{equation}
Define the tangential speed $s_\ell(\varepsilon_\ell)
=
\sqrt{v_\ell^2-w_\ell^2(\varepsilon_\ell)}$.
The desired lead angle $\phi_{\ell}$ is
\begin{equation}
\label{eq:leader_desired_heading}
\phi_\ell^\star
=
\arctan\!\left(\sigma_\ell s_\ell(\varepsilon_\ell),
w_\ell(\varepsilon_\ell)\right).
\end{equation}
By construction,
\begin{equation}
\label{eq:leader_radial_tangential_identity}
v_\ell\cos\phi_\ell^\star=w_\ell(\varepsilon_\ell),
\qquad
v_\ell\sin\phi_\ell^\star=\sigma_\ell s_\ell(\varepsilon_\ell).
\end{equation}
Let $\tilde\phi_\ell=(\phi_\ell-\phi_\ell^\star)$, then the proposed leader angular speed guidance law is
\begin{equation}
\label{eq:leader_guidance_law}
u_\ell
=
\frac{v_\ell}{r_\ell}\sin\phi_\ell
+\dot\phi_\ell^\star
-k_{\phi\ell}|\tilde\phi_\ell|^{\beta_\ell}\sgn(\tilde\phi_\ell).
\end{equation}
For $\varepsilon_\ell\neq 0$, $w_\ell'(\varepsilon_\ell)
=
{(d w_\ell)}/{(d\varepsilon_\ell)}
=
-k_{r\ell}\alpha_\ell|\varepsilon_\ell|^{\alpha_\ell-1}$.
 Using eqn.~\eqref{eq:leader_desired_heading}, the term $\dot{\phi}_i^{\star}$ is
\begin{equation}
\label{eq:leader_chistar_dot}
\dot\phi_\ell^\star
=
-\sigma_\ell
\frac{w_\ell'(\varepsilon_\ell)\dot r_\ell}
{s_\ell(\varepsilon_\ell)}
=
-\sigma_\ell
\frac{w_\ell'(\varepsilon_\ell)v_\ell\cos\phi_\ell}
{s_\ell(\varepsilon_\ell)}.
\end{equation}
At $\varepsilon_\ell=0$, the desired circle is reached, and the steady circumnavigation is obtained by taking $\dot\phi_\ell^\star=0$. The convergence and stability properties of the proposed leader guidance law are stated in the following theorem.

\begin{thm}
\label{thm:leader_guidance}
Consider any leader $\ell\in\mathcal{L}$ governed by kinematics given in
eqn.~\eqref{eq:unicycle_kinematics}. Suppose $r_\ell(0)>0$ and
eqn.~\eqref{eq:leader_admissible_radial_speed} holds. Then the guidance law given in eqn.~\eqref{eq:leader_guidance_law} drives the leader to $r_\ell=R_\ell$ in finite
time. After convergence, $\dot r_\ell=0$ and $\dot\lambda_\ell=\sigma_\ell{(v_\ell)}/{(R_\ell)}$.

Moreover, for any two leaders $\ell,m\in\mathcal{L}$, define $T_{\phi,\ell}
=
({|\tilde\phi_\ell(0)|^{1-\beta_\ell}})
/({k_{\phi\ell}(1-\beta_\ell)})$
and $\mathcal R_\ell
=
\left[
\min\{R_\ell,r_\ell(0)-v_\ell T_{\phi,\ell}\},
\,
\max\{R_\ell,r_\ell(0)+v_\ell T_{\phi,\ell}\}
\right].$
If $\dist(\mathcal R_\ell,\mathcal R_m)>2R_s$ at $t=0$, for every $\ell\neq m,$ and $\ell,m\in\mathcal{L}$.
Then, the leader-leader collision avoidance is guaranteed, i.e., $\|P_\ell(t)-P_m(t)\|>2R_s,\quad \forall t\ge0$.
\end{thm}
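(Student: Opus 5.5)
The proof will proceed as a two-stage finite-time cascade: the heading error $\tilde\phi_\ell$ reaches zero first, and then the radial error $\varepsilon_\ell$ does. The collision claim will then follow from a purely kinematic bound on how far each leader's radius can drift before its heading is locked.

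\emph{Heading error.} Differentiating $\tilde\phi_\ell=\phi_\ell-\phi_\ell^\star$ along eqn.~\eqref{eq:leader_polar_kinematics} and substituting eqn.~\eqref{eq:leader_guidance_law}, the terms $\frac{v_\ell}{r_\ell}\sin\phi_\ell$ and $\dot\phi_\ell^\star$ cancel. This leaves
\begin{equation*}
\dot{\tilde\phi}_\ell=-k_{\phi\ell}|\tilde\phi_\ell|^{\beta_\ell}\sgn(\tilde\phi_\ell).
\end{equation*}
With $V_\phi=\tfrac12\tilde\phi_\ell^2$ we get $\dot V_\phi=-k_{\phi\ell}2^{(1+\beta_\ell)/2}V_\phi^{(1+\beta_\ell)/2}$. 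Integrating the scalar equation directly gives $|\tilde\phi_\ell(t)|^{1-\beta_\ell}=|\tilde\phi_\ell(0)|^{1-\beta_\ell}-k_{\phi\ell}(1-\beta_\ell)t$. Hence $\tilde\phi_\ell\equiv 0$ for all $t\ge T_{\phi,\ell}$, which is exactly the constant $T_{\phi,\ell}$ in the statement.

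Two technical points have to be settled here. First, $\dot\phi_\ell^\star$ in eqn.~\eqref{eq:leader_chistar_dot} is singular at $\varepsilon_\ell=0$, because $w_\ell'$ blows up. We will treat the closed loop in the Filippov sense, or simply note that the $\tilde\phi_\ell$ equation does not involve $\dot\phi_\ell^\star$ at all. Second, $s_\ell>0$ must hold, which is guaranteed by eqn.~\eqref{eq:leader_admissible_radial_speed}. We also need $r_\ell$ to stay positive on $[0,T_{\phi,\ell}]$. For this we use $|\dot r_\ell|\le v_\ell$ together with the assumption (implicit in $\mathcal R_\ell$) that $r_\ell(0)-v_\ell T_{\phi,\ell}>0$; if needed we will state this as part of the admissibility hypothesis.

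\emph{Radial error.} For $t\ge T_{\phi,\ell}$ we have $\phi_\ell=\phi_\ell^\star$, so eqn.~\eqref{eq:leader_radial_tangential_identity} gives $\dot\varepsilon_\ell=\dot r_\ell=w_\ell(\varepsilon_\ell)=-k_{r\ell}|\varepsilon_\ell|^{\alpha_\ell}\sgn(\varepsilon_\ell)$. The same integration as before shows that $\varepsilon_\ell$ reaches $0$ within an additional time $|\varepsilon_\ell(T_{\phi,\ell})|^{1-\alpha_\ell}/(k_{r\ell}(1-\alpha_\ell))$. It then stays at $0$, since $\varepsilon_\ell=0$ is the unique solution from $0$ of this scalar equation in forward time (the right-hand side is monotone decreasing).

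At $\varepsilon_\ell=0$ we have $w_\ell=0$ and $s_\ell=v_\ell$, so $\phi_\ell^\star=\sigma_\ell\pi/2$. This gives $\dot r_\ell=0$ and $\dot\lambda_\ell=\frac{v_\ell}{R_\ell}\sin(\sigma_\ell\pi/2)=\sigma_\ell v_\ell/R_\ell$. It is consistent with the convention $\dot\phi_\ell^\star=0$ there.

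\emph{Leader--leader separation.} We will show $r_\ell(t)\in\mathcal R_\ell$ for all $t\ge0$. On $[0,T_{\phi,\ell}]$, $|\dot r_\ell|\le v_\ell$ gives $r_\ell(t)\in[r_\ell(0)-v_\ell T_{\phi,\ell},\,r_\ell(0)+v_\ell T_{\phi,\ell}]$. After $T_{\phi,\ell}$, $\varepsilon_\ell$ moves monotonically toward zero, so $r_\ell$ lies between $r_\ell(T_{\phi,\ell})$ and $R_\ell$. Both cases are contained in $\mathcal R_\ell$.

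Since $\mathcal R_\ell$ and $\mathcal R_m$ are disjoint intervals at distance greater than $2R_s$, the reverse triangle inequality gives
\begin{equation*}
\|P_\ell-P_m\|\ge|r_\ell-r_m|\ge\dist(\mathcal R_\ell,\mathcal R_m)>2R_s
\end{equation*}
for all $t$. Note that $\mathcal R_\ell$ is fixed from time-zero data, so the hypothesis at $t=0$ suffices.

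The main obstacle is making the first step rigorous despite the non-Lipschitz terms: the singular $\dot\phi_\ell^\star$ near $\varepsilon_\ell=0$ and the fractional powers. The plan is to argue existence of solutions on the set $\varepsilon_\ell\neq0$ via the decoupled $\tilde\phi_\ell$ equation. Once $\varepsilon_\ell$ hits zero after $T_{\phi,\ell}$, the sliding/convention $\dot\phi_\ell^\star=0$ keeps the state on the circle. We will also check that $\varepsilon_\ell$ cannot hit zero with $\tilde\phi_\ell\neq0$ in a way that breaks the argument. If that happens, we would handle it by noting that $\tilde\phi_\ell$'s dynamics are unaffected and simply restart the radial argument at $T_{\phi,\ell}$.
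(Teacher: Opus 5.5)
Your proposal follows the paper's proof essentially step for step: the decoupled finite-time $\tilde\phi_\ell$ dynamics with settling time $T_{\phi,\ell}$, then finite-time convergence of $\varepsilon_\ell$ under $\dot\varepsilon_\ell=w_\ell(\varepsilon_\ell)$ once the heading is locked, then $r_\ell(t)\in\mathcal R_\ell$ via $|\dot r_\ell|\le v_\ell$ before $T_{\phi,\ell}$ and monotonicity after it, and finally the reverse triangle inequality. The technical caveats you add are not addressed in the paper's proof, so they make your argument more careful than the paper's rather than different from it. These caveats are the singular $\dot\phi_\ell^\star$ at $\varepsilon_\ell=0$, forward uniqueness at zero for the scalar equations, and positivity of $r_\ell$ on $[0,T_{\phi,\ell}]$; the last genuinely needs an extra hypothesis such as $r_\ell(0)-v_\ell T_{\phi,\ell}>0$, which the stated assumptions do not imply.
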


\begin{proof}
From eqns.~\eqref{eq:leader_polar_kinematics} and
\eqref{eq:leader_guidance_law}, $\dot\phi_\ell
=
u_\ell-{(v_\ell)}/{(r_\ell)}\sin\phi_\ell
=
\dot\phi_\ell^\star
-k_{\phi\ell}|\tilde\phi_\ell|^{\beta_\ell}\sgn(\tilde\phi_\ell)$. Hence, $\dot{\tilde\phi}_\ell
=
-k_{\phi\ell}|\tilde\phi_\ell|^{\beta_\ell}\sgn(\tilde\phi_\ell).$

With Lyapunov candidate function $V_{\phi,\ell}=(1/2)\tilde\phi_\ell^2$, $\dot V_{\phi,\ell}
=
-k_{\phi\ell}|\tilde\phi_\ell|^{\beta_\ell+1}<0$ for $\tilde\phi_\ell\neq0$. Let $z_\ell=|\tilde\phi_\ell|$. Then, $\dot z_\ell=-k_{\phi\ell}z_\ell^{\beta_\ell}$. Therefore $({d}/{dt})z_\ell^{1-\beta_\ell}
=
-k_{\phi\ell}(1-\beta_\ell)$, which gives $|\tilde\phi_\ell(t)|^{1-\beta_\ell}
=
|\tilde\phi_\ell(0)|^{1-\beta_\ell}
-k_{\phi\ell}(1-\beta_\ell)t$.
Thus $\tilde\phi_\ell$ reaches zero in finite time $T_{\phi,\ell}$.

For $t\ge T_{\phi,\ell}$, $\phi_\ell=\phi_\ell^\star$. Hence, using
eqn.~\eqref{eq:leader_radial_tangential_identity}, $\dot\varepsilon_\ell
=
\dot r_\ell
=
v_\ell\cos\phi_\ell^\star
=
w_\ell(\varepsilon_\ell)
=
-k_{r\ell}|\varepsilon_\ell|^{\alpha_\ell}\sgn(\varepsilon_\ell).$
Then, after solving, we can write $|\varepsilon_\ell(t)|^{1-\alpha_\ell}
=
|\varepsilon_\ell(T_{\phi,\ell})|^{1-\alpha_\ell}
-k_{r\ell}(1-\alpha_\ell)(t-T_{\phi,\ell})$ until the right-hand side reaches zero because the left-hand side cannot be negative. Since $0<\alpha_\ell<1$,
$\varepsilon_\ell$ reaches zero in finite time. Therefore
$r_\ell=R_\ell$ in finite time.
At $\varepsilon_\ell=0$, $w_\ell(0)=0$ and $s_\ell(0)=v_\ell$, so $\phi_\ell^\star=\operatorname{atan2}(\sigma_\ell v_\ell,0).$
Thus, $\cos\phi_\ell^\star=0$ and $\sin\phi_\ell^\star=\sigma_\ell$.

Using eqn.~\eqref{eq:leader_polar_kinematics}, $\dot r_\ell=0$ and $\dot\lambda_\ell=(\sigma_\ell {v_\ell})/{R_\ell}$. Now, it remains to show the stated leader-leader safety condition. Before
$T_{\phi,\ell}$, $|\dot r_\ell|=|v_\ell\cos\phi_\ell|\le v_\ell$ so $r_\ell(t)\in
\left[
r_\ell(0)-v_\ell T_{\phi,\ell},
\,
r_\ell(0)+v_\ell T_{\phi,\ell}
\right].$

After $T_{\phi,\ell}$, $\dot\varepsilon_\ell$ has the opposite sign of
$\varepsilon_\ell$, so $r_\ell(t)$ moves monotonically toward $R_\ell$.
Therefore $r_\ell(t)\in\mathcal R_\ell$ $\forall$ $t\ge0$.

For any two leaders $\ell,m$, the reverse triangle inequality gives $\|P_\ell-P_m\|
\ge
\big|\|P_\ell\|-\|P_m\|\big|
=
|r_\ell-r_m|.$
Since $r_\ell(t)\in\mathcal R_\ell$ and
$r_m(t)\in\mathcal R_m$, $|r_\ell(t)-r_m(t)|
\ge
\dist(\mathcal R_\ell,\mathcal R_m)>2R_s.$
Hence $\|P_\ell(t)-P_m(t)\|>2R_s$ for all $t\ge0$. Hence, proved.
\end{proof}

When multiple leaders are mutually within each other’s sensing regions, it is not necessary to guide each leader independently for target circumnavigation. Instead, leaders can be guided by a distributed guidance rule, which enhances scalability, robustness, operational efficiency, and resource utilisation while eliminating the issue of a `single point of failure'. Accordingly, we present the following proposition.

\begin{Proposition}
\label{propo:leaders_less}
Consider a set of at least two leaders $\mathcal{L} = \{l_1, l_2, \dots\}$
such that each leader can sense a few other leaders, which can either be zero or non-zero. Suppose $R_{l_i}$ is the desired circumnavigation radius of the leader $l_i$. Then, every leader determines their respective desired radii according to the following rule:
\begin{subequations}
\label{eq:rule_less}
\begin{align}
\label{eq:rule_part1_less}
R_{l_i} &= \frac{r_{l_i}(0)}{b} + a\, i
&& \text{if }  r_{l_i}(0) \geqslant r_{\mathcal{N}^{l_i}_{out}}(0), \\
\label{eq:rule_part2_ess}
R_{l_i} &= \frac{r_{l_i}(0)}{b}
&& \text{if $r_{l_i}(0) < r_{\mathcal{N}^{l_i}_{out}}(0)$ or $r_{\mathcal{N}^{l_i}_{out}} \in \chi$},
\end{align}
\end{subequations}
where $b \in \mathbb{R}^+$ and $a \in \mathbb{R}$ are any constants as long as $R_{l_i}>0$, $\chi$ denotes an empty set, $r_{l_i}(0)$ is the distance of each leader $l_i$ from the target at time $t=0$, and $\mathcal{N}^{l_i}_{out}$ denotes the set of out-neighbour leaders of \(l_i\). The leader $l_i$ follows Theorem~\ref{thm:leader_guidance} to achieve the desired radii.
\end{Proposition}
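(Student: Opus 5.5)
The proposition is a constructive rule, so the plan is to show three things: (i) the rule in eqn.~\eqref{eq:rule_less} is well defined and distributed; (ii) it produces admissible, pairwise distinct desired radii; (iii) with these radii, Theorem~\ref{thm:leader_guidance} can be invoked for each leader, giving finite-time convergence to circles of radii $R_{l_i}$ and, under its separation hypothesis, leader--leader collision avoidance.

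\emph{Step 1: well-posedness and distributedness.} Each leader $l_i$ knows the target, so it knows $r_{l_i}(0)$. It needs $r_{\mathcal{N}^{l_i}_{out}}(0)$ only when it senses another leader. In that case the sensed leader is itself a leader and can share its range at $t=0$, or $l_i$ can reconstruct that range from the LOS bearing and $\alpha_{ij}$ via eqn.~\eqref{eq:alpha_ij}. If no leader is sensed, $\mathcal{N}^{l_i}_{out}\in\chi$ and the second branch applies. I will check that the three conditions in eqn.~\eqref{eq:rule_less} cover every leader exactly once, so each $R_{l_i}$ is uniquely defined. Positivity, $R_{l_i}>0$, follows from $b>0$ and $r_{l_i}(0)>0$ in the second branch. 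In the first branch it is the stated restriction on $a$; I would make this concrete as $a> -\min_i r_{l_i}(0)/(b\,i)$.

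\emph{Step 2: separation of the desired radii.} The map $r\mapsto r/b$ preserves strict order. So two leaders with different initial ranges already receive different radii $r/b$, with gap $|r_{l_i}(0)-r_{l_j}(0)|/b$. The only danger is coincidence, namely two leaders with equal, or nearly equal, initial ranges. The offset $a\,i$ is added exactly to the leader that is not closer than its sensed neighbour. Because the indices are distinct, any two leaders that both receive the offset differ by $|a|\,|i-j|\ge |a|$. I will then bound $|R_{l_i}-R_{l_j}|$ from below in each of the four branch combinations and choose $|a|$ (together with $b$) large enough that every pairwise gap exceeds $2R_s$. Leaders that sense no other leader are, by construction of the sensing regions, already far apart; I would either handle them through the ordering argument or note that they are already separated.

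\emph{Step 3: invoking Theorem~\ref{thm:leader_guidance}.} Each leader runs eqn.~\eqref{eq:leader_guidance_law} with its own $R_{l_i}$. Theorem~\ref{thm:leader_guidance} then gives $r_{l_i}\to R_{l_i}$ in finite time, with $\dot r_{l_i}=0$ and $\dot\lambda_{l_i}=\sigma_{l_i}v_{l_i}/R_{l_i}$ afterwards. This is circumnavigation on circles that, by Step 2, are separated by more than $2R_s$.

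\emph{Main obstacle.} Separated final radii are not enough for safety during the transient. The hypothesis of Theorem~\ref{thm:leader_guidance} is $\dist(\mathcal R_\ell,\mathcal R_m)>2R_s$, and the intervals $\mathcal R_\ell$ also contain $r_\ell(0)\pm v_\ell T_{\phi,\ell}$. My first attempt is to shrink the transient by taking $k_{\phi\ell}$ large, so that $T_{\phi,\ell}\to0$. Then $\mathcal R_\ell$ collapses to the segment between $r_\ell(0)$ and $R_\ell$. The condition becomes that these segments are disjoint with margin $2R_s$, which holds if the rule preserves the ordering of initial ranges and the initial ranges are themselves separated by more than $2R_s$. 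If the offset $a\,i$ reverses the order, the segments may cross. In that case I expect to need an explicit admissibility assumption on the initial configuration, or the choice $b=1$ with $a$ of a fixed sign, and I would state this condition openly rather than claim unconditional collision avoidance.
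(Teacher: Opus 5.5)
The paper states this proposition without proof. It is a design prescription, and the only content that can actually be verified is this: each $R_{l_i}$ is well defined and positive, and Theorem~\ref{thm:leader_guidance} then applies to each leader separately, since the leader law uses only that leader's own states. Your Steps~1 and~3 deliver exactly this, apart from Step~3's reliance on Step~2, and with two caveats. First, when a leader senses several leaders, $r_{\mathcal{N}^{l_i}_{out}}(0)$ is undefined, so you must fix a convention (e.g., the nearest sensed leader). Second, enlarging $|a|$ (or shrinking $b$) enlarges $|\varepsilon_\ell(0)|$. The hypothesis $k_{r\ell}|\varepsilon_\ell|^{\alpha_\ell}<v_\ell$ of Theorem~\ref{thm:leader_guidance} then forces a correspondingly small $k_{r\ell}$.

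The genuine gap is Step~2, where you set out to show that the rule yields pairwise distinct radii separated by more than $2R_s$. That is false for the rule as stated.
\begin{itemize}
\item Take two leaders that start at the same range $r$ and sense no other leader. Both fall in the second branch and receive $R=r/b$ for every choice of $a$ and $b$. If their speeds or rotation directions differ, their angular positions on the common circle drift and they collide. So the remark that such leaders are ``already far apart'' does not help.
\item More generally, for two second-branch leaders the gap $|r_{l_i}(0)-r_{l_j}(0)|/b$ does not involve $a$ at all.
\item Your bound for two offset leaders is also incorrect. They differ by $(r_{l_i}(0)-r_{l_j}(0))/b+a(i-j)$, which equals $a(i-j)$ only when the ranges coincide.
\item In the mixed case the gap $(r_{l_i}(0)-r_{l_j}(0))/b+a\,i$ vanishes whenever $r_{l_j}(0)-r_{l_i}(0)=abi$.
\end{itemize}
So no constants fixed in advance work for all configurations. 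Choosing $a>\max_{i,j}|r_{l_i}(0)-r_{l_j}(0)|/b+2R_s$ does separate every pair in which at least one leader gets the offset. However, it requires global knowledge of the initial ranges, which undercuts the distributedness claimed in Step~1. The second-branch pairs still need an extra hypothesis, such as distinct initial ranges together with $b$ small enough.

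Your transient paragraph has a similar flaw. Even with $T_{\phi,\ell}\to0$, order preservation plus initial gaps exceeding $2R_s$ do not make the segments between $r_\ell(0)$ and $R_\ell$ separated when $b\neq1$. For example, take $b=1/2$, $R_s=3$ and two second-branch leaders at ranges $10$ and $17$. The segments $[10,20]$ and $[17,34]$ overlap, so $\dist(\mathcal R_\ell,\mathcal R_m)>2R_s$ fails. What is needed is $\max\{r_\ell(0),R_\ell\}+2R_s<\min\{r_m(0),R_m\}$, plus a margin for $v_\ell T_{\phi,\ell}$, for each pair with $\mathcal R_\ell$ below $\mathcal R_m$. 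This is a condition on the configuration and the constants, not a consequence of the rule. Either drop separation and collision avoidance from what you claim (the proposition does not assert them), or state them as consequences of such an explicit extra hypothesis.
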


At this stage, it is established that only the leader agents know the target information, whereas the followers do not. Moreover, each follower can access only the angular information received from its designated out-neighbour in the interaction graph. Therefore, for the followers to successfully circumnavigate the target along with their respective out-neighbours, certain conditions on the exchanged angular information must be satisfied. These conditions are discussed next. 
\begin{figure}[ht]
    \centering
    \includegraphics[scale=0.7]{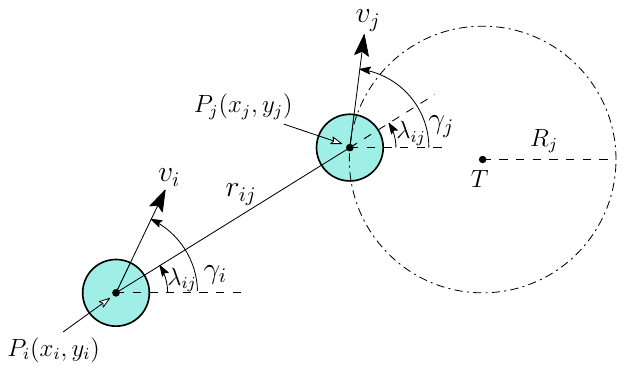}
    \caption{Engagement geometry}
    \label{fig:engage_geom}
\end{figure}
\subsection{Conditions on circumnavigation}
Consider an agent $i$ located at a distance $r_{ij}$ from another agent $j$, where agent $j$ has already converged to a circular path about the target (see Fig.~\ref{fig:engage_geom}), as guaranteed by Theorem~\ref{thm:leader_guidance}. Without loss of generality, the target is assumed to be positioned at the origin $O$. Let $P_i$ and $P_j$ denote the positions of agents $i$ and $j$, respectively. The LOS angle corresponding to the line joining agents $i$ and $j$, measured at $P_i$, is denoted by $\lambda_{ij}$.

The relative kinematics in polar coordinates become:
\begin{subequations}
\label{eq:kinematics_polar}
\begin{align}
\dot{r}_{ij} &= v_j \cos(\gamma_j - \lambda_{ij}) - v_i \cos(\gamma_i - \lambda_{ij}), \label{eq:r_ij_dot} \\
r_{ij}\dot{\lambda}_{ij} &= v_j \sin(\gamma_j - \lambda_{ij}) - v_i \sin(\gamma_i - \lambda_{ij}). \label{eq:lambda_ij_dot}
\end{align}
\end{subequations}

To analyse the relative motion between the two agents, we suitably define two angular error variables:
\setlength{\abovedisplayskip}{2pt}
\setlength{\belowdisplayskip}{2pt}
    \begin{subequations}
    \label{eq:error variables}
    \begin{align}
    \label{eq:e_1^j}
     e_{i_1}^j &\triangleq \gamma_j - \gamma_i, \\
    \label{eq:e_2^j}
     e_{i_2}^j &\triangleq \lambda_{ij} - \gamma_i,
    \end{align}
    \end{subequations}
to be used throughout the paper. Using these errors, in order for agent $i$ to circumnavigate the same centre as agent $j$, certain geometric conditions must be satisfied. As discussed in the following lemma.
\begin{lemma}
    \label{lemm:unique_equilibrium}
    Consider any two agents $i,j\in \mathcal{V}$ with unicycle kinematics such that $(i,j)\in \mathcal{E}$. Suppose the agent $j$ traverses a fixed circular trajectory of radius $R_j$ (centred at target $T$ with angular speed $\omega_j$, see Fig.~\ref{fig:engage_geom}). If the following conditions are satisfied, then the agent $i$ is guaranteed to move on a concentric circle with angular speed $\omega_j$.
\begin{enumerate}
    \item[a)] the heading angle difference remains constant, i.e. $\dot{\gamma}_j(t)-\dot{\gamma}_i(t)=0$,
    \item[b)] the offset angle between the velocity vector and LOS remains constant, i.e. $\dot{\lambda}_{ij}(t)-\dot{\gamma}_i(t)=0$.
\end{enumerate}
\end{lemma}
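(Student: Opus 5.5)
Under conditions a) and b), the angle differences $\gamma_j-\gamma_i$ and $\lambda_{ij}-\gamma_i$ are constant, so the whole relative geometry between $i$ and $j$ rotates rigidly with $j$. I will show that $P_i$ is the image of $P_j$ under a fixed complex linear map, which forces $i$ onto a concentric circle. Place the target at the origin and use complex notation: $P_j(t)=R_j e^{\iota\lambda_j(t)}$ with $\dot\lambda_j=\omega_j$ constant. Since $j$ moves on the circle with constant speed $v_j$, its heading is tangential, $\gamma_j=\lambda_j+\sigma\pi/2$, so $\dot\gamma_j=\omega_j$. Condition a) then gives $\dot\gamma_i=\omega_j$, and condition b) gives $\dot\lambda_{ij}=\omega_j$. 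Hence
\[
\gamma_i(t)=\gamma_i(0)+\omega_j t,\qquad \lambda_{ij}(t)=\lambda_{ij}(0)+\omega_j t,
\]
and $e^{j}_{i_1},e^{j}_{i_2}$ are constant, say $\bar e_1,\bar e_2$.

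\textbf{Step 1: the inter-agent distance is constant.} Write $\gamma_j-\lambda_{ij}=\bar e_1-\bar e_2$ and $\gamma_i-\lambda_{ij}=-\bar e_2$. Then eqn.~\eqref{eq:r_ij_dot} shows that $\dot r_{ij}$ is a constant $c$, so $r_{ij}(t)=r_{ij}(0)+ct$. Likewise eqn.~\eqref{eq:lambda_ij_dot} reads $r_{ij}\omega_j=K$ for a constant $K$. If $\omega_j\neq 0$, then $r_{ij}=K/\omega_j$ is constant, so $c=0$. This is the step where the assumption that $j$ actually rotates, $\omega_j\ne 0$, is used.

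\textbf{Step 2: $i$ lies on a concentric circle.} The relative position is $P_j-P_i=r_{ij}e^{\iota\lambda_{ij}}$. By Step 1 and the expressions above, $\lambda_{ij}-\lambda_j$ is the constant $\lambda_{ij}(0)-\lambda_j(0)$. Therefore
\[
P_i(t)=P_j(t)-r_{ij}e^{\iota\lambda_{ij}(t)}=\big(R_j-r_{ij}e^{\iota(\lambda_{ij}(0)-\lambda_j(0))}\big)e^{\iota\lambda_j(t)}=:\rho\, e^{\iota\lambda_j(t)},
\]
with $\rho\in\mathbb{C}$ constant. So $\|P_i\|=|\rho|$ is constant, and the polar angle of $P_i$ is $\arg\rho+\lambda_j(t)$, which advances at rate $\omega_j$. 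Thus $i$ moves on a circle of radius $|\rho|$ centred at $T$ with angular speed $\omega_j$.

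\textbf{Consistency check.} Differentiating gives $\dot P_i=\iota\omega_j\rho e^{\iota\lambda_j}$, whose modulus $|\omega_j||\rho|$ must equal $v_i$. This is an admissibility relation between $\omega_j$, $v_i$ and the constant offsets, not an extra hypothesis, because it follows from the kinematics. I expect the only real obstacle to be Step 1, namely excluding the degenerate case where $r_{ij}$ drifts linearly. Constancy of the right-hand side of eqn.~\eqref{eq:lambda_ij_dot}, combined with $\dot\lambda_{ij}=\omega_j\ne0$, settles it. I will state explicitly that the stationary case $\omega_j=0$ is excluded, since circumnavigation requires a nonzero angular speed.
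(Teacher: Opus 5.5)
Your proof is correct, but it takes a different route from the paper's.

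The paper uses condition a) alone to integrate $\overrightarrow{v_i}=\alpha\overrightarrow{v_j}$, with $\alpha=(v_i/v_j)e^{-\iota e_{i_1}^j}$. This gives $\overrightarrow{P_i}=\alpha\overrightarrow{P_j}+c$: a circle of radius $v_iR_j/v_j$ traversed at $\omega_j$, but with unknown centre $c$. It then uses condition b), in the form that $\arg\overrightarrow{r_{ij}}-\theta_j$ is constant with $\overrightarrow{r_{ij}}=(1-\alpha)\overrightarrow{P_j}-c$, to force $c=0$.

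You instead use a) and b) together from the start to freeze both $\gamma_j-\lambda_{ij}$ and $\gamma_i-\lambda_{ij}$. You then read off from eqn.~\eqref{eq:lambda_ij_dot} that $r_{ij}\omega_j$ is constant, and write $P_i$ as $P_j$ minus a vector co-rotating with $P_j$.

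Your argument is more elementary and is airtight exactly where the paper is terse. The paper's assertion that eqn.~\eqref{eq:arg_img} fails for $c\neq0$ silently relies on two facts: that $\theta_j$ sweeps an interval, and that the points $(1-\alpha)R_j-c\,e^{-\iota\theta_j}$ (a circle of radius $|c|$) cannot all lie on one ray. You isolate the single place where $\omega_j\neq0$ enters, and you obtain the constant $r_{ij}$ of Remark~\ref{remark:rigid_body} along the way.

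The paper's route, in exchange, separates the roles of the two conditions: a) fixes shape and angular speed, b) pins the centre. It also delivers $\overrightarrow{P_i}=\alpha\overrightarrow{P_j}$ with its radius explicitly, which you recover only through $|\omega_j||\rho|=v_i$.

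Two small points. First, $\omega_j\neq0$ is automatic rather than an extra assumption, since $|\omega_j|=v_j/R_j>0$. Second, the relation $|\omega_j||\rho|=v_i>0$ is what rules out $\rho=0$, i.e.\ agent $i$ sitting at $T$. Cite it for that purpose, not merely as a consistency check.
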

\begin{proof}
The position of agent $j$ can be expressed as $\overrightarrow{P_j}(t)=x_j(t)+ \iota y_j(t)$. Further, since agent $j$ is circumnavigating the target, its position vector can be written as $\overrightarrow{P_j}(t)=R_j \exp{(\iota\theta_j(t))}$ where the radius $R_j$ is constant and $\theta_j$ is the angular coordinate. With reference to Fig. \ref{fig:engage_geom}, it follows from the circular geometry that $\gamma_j(t)=\theta_j(t)-\pi/2$. Its velocity vector is then given by $\overrightarrow{v_j}(t)=\frac{d}{dt}\overrightarrow{P_j}(t)=\iota R_j\dot{\theta}_j(t)\exp{(\iota\theta_j(t))}$. A constant $v_j$ necessitates that the $\dot{\theta}_j(t)$ is also constant.

Now we analyse the relative motion between the two agents under conditions (a) and (b) of the Lemma statement. First, we analyse the implication of condition (a) of the Lemma statement. We start by substituting the condition $\gamma_i(t) = \gamma_j(t) - e_{i_1}^j$ into the expression for $\overrightarrow{v_i}(t)$ and simplifying, we get $\overrightarrow{v_i}(t)=(v_i/v_j)(\exp{(-\iota e_{i_1}^j}))\overrightarrow{v_j}(t)$. 
So, $\overrightarrow{v_i}(t)$ is proportional to $\overrightarrow{v_j}(t)$ by a constant, $\alpha \triangleq (v_i/v_j)\exp{(-\iota e_{i_1}^j)}$. Hence, exploiting the velocity and position relationship, we get $\overrightarrow{P_i}(t)=\alpha \overrightarrow{P_j}(t)+c$, where $c$ is a constant of integration. Further, $\overrightarrow{P_i}(t)=\alpha R_j\exp{\iota\theta_j(t)}+c$ as $\overrightarrow{P_j}(t)=R_j \exp{(\iota\theta_j(t))}$. Note that the equation describes a circular path for agent $i$ of radius $|\alpha|R_j$ and centred at the point $c$.
Next, on applying condition (b) of the Lemma statement, the angular separation $e_{i_2}^j$ (as defined in eqn. \eqref{eq:e_2^j}) is constant. 

The LOS vector $\overrightarrow{r_{ij}}(t) = \overrightarrow{P_j}(t)-\overrightarrow{P_i}(t)$ from $i$ to $j$ is:
\begin{equation}
\label{eq:integrated_P_j}
\overrightarrow{r_{ij}}(t) = \overrightarrow{P_j}(t) - (\alpha\overrightarrow{P_j}(t)+c) = (1-\alpha)\overrightarrow{P_j}(t)-c.
\end{equation}

Expressed as a function of $\theta_j$, eqn. \eqref{eq:integrated_P_j} is $\overrightarrow{r_{ij}}(\theta_j) = (1-\alpha)R_j\exp{(\iota\theta_j(t))}-c$. Then, condition (b) of the lemma statement, combined with $\gamma_j(t)=\theta_j(t)-\pi/2$, implies that $\arg{(\overrightarrow{r_{ij}}(\theta_j))} - \theta_j$ must be a constant. For this to hold, its derivative w.r.t. $\theta_j$ must be zero:
\begin{equation}
\label{eq:lead_angle_condition}
    \dfrac{d}{d\theta_j}(\arg{\overrightarrow{r_{ij}}(\theta_j)} - \theta_j(t)) = 0.
\end{equation}
Using the identity for the derivative of an argument, $\frac{d}{d\theta_j}(\arg{f(\theta_j)})=\text{Img}\left(f'(\theta_j)/f(\theta_j)\right)$:
\begin{equation}
\label{eq:zero_condition}
\text{Img}\{(\overrightarrow{r_{ij}}'(\theta_j))/(\overrightarrow{r_{ij}}(\theta_j))\} - 1 = 0.
\end{equation}
We compute $\overrightarrow{r_{ij}}'(\theta_j)$ and the ratio becomes:
\begin{equation}
    \label{eq:dash_by_normal}
    \dfrac{\overrightarrow{R_{ij}}'(\theta_j)}{\overrightarrow{R_{ij}}(\theta_j)}=\dfrac{\iota(1-\alpha)R_j\exp{(\iota\theta_j})}{(1-\alpha)R_j\exp{(\iota\theta_j)}-c}.
\end{equation}
Substituting the result of the eqn. \eqref{eq:dash_by_normal} in the condition from the eqn. \eqref{eq:zero_condition}, we get:
\begin{equation}
\label{eq:arg_img}
 \text{Img}\left(\dfrac{\iota(1-\alpha)R_j\exp{(\iota\theta_j)}}{(1-\alpha)R_j\exp{(\iota\theta_j)}-c}\right) -1 = 0.  
\end{equation}
 If we test $c=0$, the expression simplifies to $\text{Img}(\iota) - 1 = 1-1 = 0$, which is true. If $c \neq 0$, the term in the eqn. \eqref{eq:arg_img} does not simplify to zero. Therefore, the only possible solution that satisfies the condition for all time is $c=0$.

Since $c=0$, the trajectory for $i$ is $\overrightarrow{P_i}(t)=\alpha\overrightarrow{P_j}(t)$. This confirms that agent $i$ circumnavigates the same centre $O$ as agent $j$ and shares the same angular velocity $\omega_j$. 
\end{proof}

\begin{remark}
    \label{remark:rigid_body}
    Under the conditions of Lemma~\ref{lemm:unique_equilibrium}, the triangle $\triangle TP_iP_j$ behaves as a rigid structure. In particular, its side lengths, the radial distances $r_j = \|P_j(t) - O\|$ and $r_i = \|P_i(t) - O\|$, along with the inter-agent separation $r_{ij} = \|P_j(t) - P_i(t)\|$ remain constant for all $t \in \mathbb{R}$.
\end{remark}

Lemma~\ref{lemm:unique_equilibrium} establishes that, for any agent $i$, maintaining constant values of the errors $e^j_{i_1}$ and $e^j_{i_2}$ with respect to agent $j$ constitutes a necessary condition for circumnavigation about a common centre. However, since the number of agents surrounding any agent is finite, based on the nearest neighbour selection rule, the edges of the interaction graph cannot vary continuously over time. Instead, they undergo changes only at discrete time instants. Consequently, even in the presence of a time-varying topology, the resulting interaction graph can be represented as a sequence of piecewise-static graphs. Therefore, the subsequent analysis is first developed for a static interaction topology and then for a time-varying interaction topology, as presented next.
\subsection{Circumnavigation under static interaction topologies}
\label{subsec:static_graphs}
The static interaction topology is analysed in two steps. First, we write the safety requirement in terms of the angular error variables used for control design. Then, we combine it with a BLF-based guidance law and establish convergence.
\subsubsection{ Characterisation of the safe set via angular errors.}
\label{subsec:safe_set}
Collision avoidance for the pair $(i,j)$ is geometrically characterised by the condition $r_{ij} > 2R_s$, or, equivalently, by $\alpha_{ij} < \pi/3$ via eqn.~\eqref{eq:alpha_ij}. This geometric condition is exact. However, for control design, it is desirable to express safety in terms of locally measurable variables.
From the relative kinematics in polar coordinates given in eqn.~\eqref{eq:kinematics_polar}, the rate of change of the inter-agent distance satisfies $\dot{r}_{ij} = v_j \cos(\gamma_j - \lambda_{ij}) - v_i \cos(\gamma_i - \lambda_{ij}).$
Using the definition $e_{i_2}^j = \lambda_{ij} - \gamma_i$, this can be rewritten as $\dot{r}_{ij} = v_j \cos(\gamma_j - \lambda_{ij}) - v_i \cos(e_{i_2}^j).$

To derive a conservative safety condition, consider the worst-case closing scenario in which agent $j$ moves directly towards agent $i$, i.e., $\cos(\gamma_j - \lambda_{ij}) = 1$. In this case, $\dot{r}_{ij}^{\mathrm{worst}} = v_j - v_i \cos(e_{i_2}^j).$
To prevent collision at the boundary $r_{ij} = 2R_s$, it is sufficient to require $\dot{r}_{ij}^{\mathrm{worst}} \geq 0$, which yields $v_j - v_i \cos(e_{i_2}^j) \geq 0.$
If $v_i > v_j$, this implies $|e_{i_2}^j| \geq \arccos\!\left({v_j}/{v_i}\right),$
and we define the corresponding threshold $\bar{\alpha}_{ij} := \arccos\!\left({v_j}/{v_i}\right).$
On the other hand, if $v_i \leq v_j$, then $\dot{r}_{ij}^{\mathrm{worst}} \geq 0$ for all admissible values of $e_{i_2}^j$, and no additional angular restriction is required from the kinematic perspective.

Combining this worst-case bound with the geometric constraint $\alpha_{ij} < \pi/3$, we obtain the conservative admissible angular limit
\setlength{\abovedisplayskip}{2pt}
\setlength{\belowdisplayskip}{2pt}
\begin{equation}
\label{eq:alpha_bar_combined}
\bar{\alpha}_{ij} := \min\left\{{\alpha}_{ij}, \frac{\pi}{3}\right\}.
\end{equation}

However, it is important to note that the LOS error $e_{i_2}^j$ captures only one component of the relative configuration. Consequently, the collision-free set cannot, in general, be exactly characterised by a one-dimensional constraint of the form $|e_{i_2}^j| < \bar{\alpha}_{ij}$. Therefore, in the present design, the BLF is not used to parameterise the entire collision-free set solely in terms of $e_{i_2}^j$.

Instead, we proceed by constructing a design-oriented admissible domain around the desired equilibrium. Let $(\bar e_{i_1}^j,\bar e_{i_2}^j)$ denote the desired constant angular errors corresponding to the safe steady-state circumnavigation geometry identified via eqn.~\eqref{eq:alpha_ij}, such that the associated equilibrium satisfies $r_{ij}^\star > 2R_s$, where $r_{ij}^\star$ is the constant inter-agent distance at equilibrium.. By continuity of the relative geometry, there exists an open neighbourhood of $(\bar e_{i_1}^j,\bar e_{i_2}^j)$ within which all corresponding configurations remain collision-free. Accordingly, we define the design admissible interval
\setlength{\abovedisplayskip}{2pt}
\setlength{\belowdisplayskip}{2pt}
\begin{equation}
\label{eq:design_interval_refined}
\mathcal{D}_i := \left\{ e_{i_2}^j \in \mathbb{R} \;:\; |e_{i_2}^j| < \bar{\alpha}_{ij} \right\}.
\end{equation}

The role of $\mathcal{D}_i$ is not to describe the entire collision-free set exactly, but to define an open domain on which the BLF-based controller is constructed and within which the closed-loop trajectories are driven towards the desired collision-free equilibrium. Collision avoidance is then ensured by selecting the initial condition within a compact sublevel set of the BLF whose image lies strictly inside the collision-free set, thereby guaranteeing forward invariance. This is the set-invariance mechanism used in the main result below.
\subsubsection{Guidance law construction for followers via barrier Lyapunov approach}
\label{subsec:guidance_law}
Having established the necessary conditions under which the followers can achieve cooperative circumnavigation using only the angular information received from their designated out-neighbours, we now present the construction of the distributed guidance law for the follower agents.

\textit{Formation parameters:}
Fix a follower $i\in\mathcal F$ and let $j\in\mathcal N_i^{out}$ denote its out-neighbour. Let $(\bar e_{i_1}^j,\bar e_{i_2}^j)$ be the desired equilibrium values of $e_{i_1}^j$ and $e_{i_2}^j$ corresponding to a collision-free steady-state relative geometry, i.e., $r_{ij}^\star>2R_s$. $\bar e_{i_1}^j$ and $\bar e_{i_2}^j$ are called formation parameters as their values decide the final shape of the formation. Define $z_1\triangleq e_{i_1}^j-\bar e_{i_1}^j$, $z_2\triangleq e_{i_2}^j$ and $\zeta_2\triangleq z_2-\bar e_{i_2}^j$.
Here, $z_2$ is the constrained error, whereas $z_1$ and $\zeta_2$ are the convergence errors associated with the desired equilibrium.

Choose $\eta_i(j)>0$ such that
\begin{equation}
\label{eq:symmetric_eta_condition}
|\bar e_{i_2}^j|<\eta_i(j)<\bar\alpha_{ij}.
\end{equation}
Define the admissible error domain
\begin{equation}
\label{eq:symmetric_domain}
\mathcal D_i
\triangleq
\left\{(z_1,z_2)\in\mathbb R^2:\ |z_2|<\eta_i(j)\right\}.
\end{equation}
Since $\eta_i(j)<\bar\alpha_{ij}$, every point in $\mathcal D_i$ satisfies $|e_{i_2}^j|<\bar\alpha_{ij}$.
For notational convenience, define
\begin{equation}
\label{eq:Di_symmetric}
D_i
\triangleq
\zeta_2
+
\frac{2\mu z_2}
{\eta_i^2(j)\left(1-\dfrac{z_2^2}{\eta_i^2(j)}\right)},
\end{equation}
\begin{equation}
\label{eq:Ai_Ci_symmetric}
A_i\triangleq z_1+D_i,\qquad
C_i\triangleq D_i\phi_i,
\end{equation}
where $\phi_i\triangleq\dot\lambda_{ij}$. The value of $A_i$ at the desired equilibrium is
\begin{equation}
\label{eq:Astar_symmetric}
A_i^\star
\triangleq
A_i(0,\bar e_{i_2}^j)
=
\frac{2\mu \bar e_{i_2}^j}
{\eta_i^2(j)\left(1-\dfrac{(\bar e_{i_2}^j)^2}{\eta_i^2(j)}\right)}.
\end{equation}
Using eqn.~\eqref{eq:symmetric_eta_condition} in eqn.~\eqref{eq:Astar_symmetric}, we can write $A_i^\star\neq0$.
Let
\begin{equation}
\label{eq:A_admissible_symmetric}
\mathcal A_i
\triangleq
\left\{
(z_1,z_2)\in\mathcal D_i:
\left|A_i(z_1,z_2)-A_i^\star\right|
\le
\frac{1}{\oplus}|A_i^\star|
\right\},
\end{equation}
where $\oplus$ is any constant greater than $1$. Thus, for every $(z_1,z_2)\in\mathcal A_i$,
\begin{equation}
\label{eq:A_lower_bound_symmetric}
|A_i(z_1,z_2)|\ge\frac{\oplus-1}{\oplus}|A_i^\star|>0.
\end{equation}

With these parameters, we define one of our main results in the following theorem.
\begin{thm}
\label{thm:main_guidance}
Consider the system of $n$ agents governed by kinematics in eqn.~\eqref{eq:unicycle_kinematics}. Let each leader $\ell\in\mathcal L$ move on a stable circular trajectory around the target so that $\dot\gamma_\ell$ is settled at $\bar\omega_\ell$ as per Theorem~\ref{thm:leader_guidance}. Suppose Assumption~\ref{assump:infor_flow} holds. For a follower $i\in\mathcal F$ with out-neighbour $j\in\mathcal N_i^{out}$, suppose the initial condition satisfies $(z_1(0),z_2(0))\in\mathcal A_i$.

Then the distributed guidance law
\begin{equation}
\label{eq:guidance_law_symmetric}
u_i
=
\frac{1}{A_i}
\left(
C_i
+
z_1\dot\gamma_j
+
\kappa_i(z_1^2+\zeta_2^2)
\right),
\qquad \kappa_i \in \mathbb{R}^+,
\end{equation}
renders $\mathcal A_i$ positively invariant while guaranteeing that $z_1(t)\to0$ and $\zeta_2(t)\to0$.
Equivalently, $e_{i_1}^j(t)\to\bar e_{i_1}^j$ and $e_{i_2}^j(t)\to\bar e_{i_2}^j$. Here, the desired equilibrium values $\bar e_{i_1}^j$ and $\bar e_{i_2}^j$ control the relative formation shape at equilibrium.
Consequently, the pair $(i,j)$ asymptotically achieves the desired collision-free circumnavigation. If every follower satisfies the same condition with respect to its out-neighbour, then all followers asymptotically circumnavigate the target and each follower converges to the angular speed of the leader at the terminus of its directed path in $\mathcal G$.
\end{thm}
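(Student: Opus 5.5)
The plan is to use the barrier Lyapunov candidate suggested by the structure of $D_i$:
\begin{equation*}
V_i=\tfrac12 z_1^2+\tfrac12\zeta_2^2+\mu\log\frac{\eta_i^2(j)}{\eta_i^2(j)-z_2^2},
\end{equation*}
which is positive, continuously differentiable on $\mathcal D_i$, and blows up as $|z_2|\to\eta_i(j)$.

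\textbf{Step 1: closed-loop derivative.} From the error definitions, $\dot z_1=\dot\gamma_j-u_i$ and $\dot z_2=\dot\zeta_2=\dot\lambda_{ij}-u_i=\phi_i-u_i$. The derivative of the logarithmic term is $2\mu z_2\dot z_2/(\eta_i^2(j)-z_2^2)$, which is exactly the barrier part of $D_i$. Hence
\begin{equation*}
\dot V_i=z_1\dot\gamma_j+D_i\phi_i-(z_1+D_i)u_i=z_1\dot\gamma_j+C_i-A_iu_i .
\end{equation*}
On $\mathcal A_i$, \eqref{eq:A_lower_bound_symmetric} gives $|A_i|\ge\frac{\oplus-1}{\oplus}|A_i^\star|>0$, so \eqref{eq:guidance_law_symmetric} is well defined and locally Lipschitz there. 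Substituting it yields $\dot V_i=-\kappa_i(z_1^2+\zeta_2^2)\le 0$. Lemma~\ref{Lemma:BLF}, with $w=(z_1,\zeta_2)$ and the constrained state $z_2$, then shows that $z_2$ never reaches $\pm\eta_i(j)$ while the solution is defined. Since $\eta_i(j)<\bar\alpha_{ij}$, the angular safety constraint and collision avoidance are preserved.

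\textbf{Step 2: positive invariance of $\mathcal A_i$ (main obstacle).} Let $T$ be the first exit time from $\mathcal A_i$. On $[0,T)$, Step 1 holds, so $V_i$ is nonincreasing and $z_2$ stays in a compact subinterval of $(-\eta_i(j),\eta_i(j))$ determined by $V_i(0)$.

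To show $T=\infty$, I would study the scalar $\Delta_i\triangleq A_i-A_i^\star=z_1+\zeta_2+g(z_2)-g(\bar e_{i_2}^j)$, where $g(z_2)=2\mu z_2/(\eta_i^2(j)-z_2^2)$ is strictly increasing. Its closed-loop derivative is
\begin{equation*}
\dot\Delta_i=\dot\gamma_j+\phi_i\bigl(1+g'(z_2)\bigr)-\bigl(2+g'(z_2)\bigr)u_i .
\end{equation*}
I would check its sign on the two faces $\Delta_i=\pm|A_i^\star|/\oplus$ via Nagumo's condition. The aim is to show $\dot\Delta_i\,\Delta_i\le 0$ there, using $A_iu_i=C_i+z_1\dot\gamma_j+\kappa_i(z_1^2+\zeta_2^2)$ to eliminate $u_i$.

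If the sign cannot be fixed directly, the fallback is to choose $\mu$ large relative to $\kappa_i$ and the speeds. Then $g'$ dominates, and $\Delta_i$ essentially tracks the monotone barrier term, whose excursion is controlled by the nonincreasing $V_i$. This is where I expect most of the work to lie.

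\textbf{Step 3: convergence.} With invariance established, integrating $\dot V_i$ gives $z_1,\zeta_2\in L_2$. The signals $\dot z_1$ and $\dot\zeta_2$ are bounded because:
\begin{itemize}
\item $|A_i|$ is bounded below on $\mathcal A_i$;
\item $z_2$ is bounded away from $\pm\eta_i(j)$;
\item $\phi_i=\dot\lambda_{ij}$ is bounded, since $r_{ij}>2R_s$ in \eqref{eq:lambda_ij_dot};
\item $\dot\gamma_j$ is bounded.
\end{itemize}
Barbalat's lemma therefore gives $z_1,\zeta_2\to0$, i.e.\ $e_{i_1}^j\to\bar e_{i_1}^j$ and $e_{i_2}^j\to\bar e_{i_2}^j$.

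Applying Barbalat once more to $\dot z_1=\dot\gamma_j-u_i$ gives $\dot\gamma_i-\dot\gamma_j\to0$, provided $\ddot z_1$ is bounded, which holds because $\dot\gamma_j$ is eventually constant. The same argument gives $\dot\lambda_{ij}-\dot\gamma_i\to0$. These are the asymptotic forms of conditions (a) and (b) of Lemma~\ref{lemm:unique_equilibrium}. With Remark~\ref{remark:rigid_body}, the pair $(i,j)$ therefore converges to concentric circumnavigation at the collision-free distance $r_{ij}^\star>2R_s$.

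\textbf{Step 4: network statement.} By Assumption~\ref{assump:infor_flow}, every follower's directed path terminates at a leader. Each leader settles in finite time with $\dot\gamma_\ell=\bar\omega_\ell$ by Theorem~\ref{thm:leader_guidance}.

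I would induct on the depth of a follower along its path to the leader. If $j$'s angular speed converges to $\bar\omega_\ell$, Steps 1--3 applied to $(i,j)$ give $\dot\gamma_i\to\bar\omega_\ell$ and circumnavigation of the same target. The $\dot\gamma_j$ term is treated as an asymptotically constant, bounded input, so a standard cascade or Barbalat argument absorbs the transient. This proves the final claim of the theorem.
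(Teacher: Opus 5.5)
Your Steps 1, 3 and 4 mirror the paper's proof: the same $V_i$, the same identity $\dot V_i=z_1\dot\gamma_j+C_i-A_iu_i$, and the same propagation along the directed path. You are also right that Step 2 is the crux. Invariance of $\mathcal A_i$ does not follow from $V_i(t)\le V_i(0)$, because $\mathcal A_i$ is not a sublevel set of $V_i$; the paper's proof makes exactly this leap. However, Step 2 cannot be closed by a Nagumo check or by enlarging $\mu$, because the claim is false. Note that $A_i=\partial V_i/\partial z_1+\partial V_i/\partial z_2$, hence $\nabla V_i(0,\bar e_{i_2}^j)=(0,A_i^\star)$. The standing requirement $A_i^\star\neq0$, equivalently $\bar e_{i_2}^j\neq0$, says precisely that the target equilibrium is \emph{not} a critical point of $V_i$. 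The minimiser of $V_i$ is $(0,z_2^m)$ with $z_2^m$ strictly between $0$ and $\bar e_{i_2}^j$, and $A_i=0$ there. Consequently, for all $\mu,\kappa_i>0$ and $\oplus>1$, $\mathcal A_i$ contains points such as $(z_1,z_2)=(0,\bar e_{i_2}^j-\delta\sgn(\bar e_{i_2}^j))$ with small $\delta>0$, at which $V_i<\mu B_i(\bar e_{i_2}^j)=V_i(0,\bar e_{i_2}^j)$. Starting at such a point, invariance of $\mathcal A_i$ (on which $A_i\neq0$ and $\dot V_i\le0$) would give $V_i(t)\le V_i(0)$ for all $t$. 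But $(z_1,\zeta_2)\to0$ would give $V_i(t)\to\mu B_i(\bar e_{i_2}^j)>V_i(0)$, so the two conclusions of the theorem cannot both hold. Your planned face computation shows why locally. With $g$ as in your Step 2, the $\kappa_i$ term of $u_i$ contributes $-(2+g'(z_2))\kappa_i(z_1^2+\zeta_2^2)/|A_i|<0$ to $\frac{d}{dt}|A_i|$. So on the face of $\mathcal A_i$ nearest to $A_i=0$ it points outward. The $\phi_i$ and $\dot\gamma_j$ contributions depend on $r_{ij}$ and on the neighbour's rotation, and are not sign-definite.

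Further steps would also need repair in any corrected version. In Step 1, $|z_2|<\eta_i(j)<\bar\alpha_{ij}$ does not give $r_{ij}>2R_s$. Here $r_{ij}$ is a state independent of $(z_1,z_2)$, and the paper itself concedes that the collision-free set is not characterised by $e_{i_2}^j$ alone. So collision avoidance is unsupported, and with it your bound on $\phi_i=\dot\lambda_{ij}$ in Step 3. In Steps 3--4, Lemma~\ref{lemm:unique_equilibrium} needs $\dot\gamma_j-\dot\gamma_i\equiv0$ and $\dot\lambda_{ij}-\dot\gamma_i\equiv0$ exactly, so an asymptotic version of it is required. Your second Barbalat step needs $\dot u_i$ bounded, not merely $\dot\gamma_j$ eventually constant. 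In the cascade, boundedness of $\dot\gamma_j=u_j$ for a follower $j$ presupposes the very invariance that fails. As it stands, the statement itself (the admissible set and/or the control design) has to change, not just its proof.
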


\begin{proof}
We first analyse a single follower $i$ whose out-neighbour $j$ already performs a stable circumnavigation of the target. From eqn.~\eqref{eq:error variables} and the relative kinematics in eqn.~\eqref{eq:kinematics_polar}, $\dot z_1=\dot\gamma_j-u_i$ and $\dot z_2=\dot e_{i_2}^j=\phi_i-u_i.$
Since $\zeta_2=z_2-\bar e_{i_2}^j$ and $\bar e_{i_2}^j$ is constant, $\dot\zeta_2=\dot z_2=\phi_i-u_i.$

Now, consider the Lyapunov candidate that we use to construct the guidance law
\begin{equation}
\label{eq:Vi_candidate_symmetric}
V_i(z_1,z_2)
=
\frac{1}{2}z_1^2
+
\frac{1}{2}\zeta_2^2
+
\mu B_i(z_2),
\end{equation}
where $\mu>0$ and
\begin{equation}
\label{eq:symmetric_BLF}
B_i(z_2)
\triangleq
-\ln\!\left(1-\frac{z_2^2}{\eta_i^2(j)}\right).
\end{equation}
The BLF in eqn.~\eqref{eq:symmetric_BLF} satisfies $B_i(z_2)\to+\infty$ as $z_2\to\pm\eta_i(j)$.
From eqn.~\eqref{eq:symmetric_BLF},
\begin{equation}
\label{eq:symmetric_BLF_derivative}
\frac{\partial B_i}{\partial z_2}
=
\frac{2z_2}
{\eta_i^2(j)\left(1-\dfrac{z_2^2}{\eta_i^2(j)}\right)}.
\end{equation}

Differentiating eqn.~\eqref{eq:Vi_candidate_symmetric} and using eqn.~\eqref{eq:symmetric_BLF_derivative} in it gives
\begin{align}
\dot V_i
&=
z_1\dot z_1
+
\zeta_2\dot\zeta_2
+
\mu\frac{\partial B_i}{\partial z_2}\dot z_2 \nonumber\\
\label{eq:V_i_dot}
&=
z_1(\dot\gamma_j-u_i)
+
\left(
\zeta_2
+
\frac{2\mu z_2}
{\eta_i^2(j)\left(1-\dfrac{z_2^2}{\eta_i^2(j)}\right)}
\right)(\phi_i-u_i).
\end{align}
Using eqns.~\eqref{eq:Di_symmetric} and~\eqref{eq:Ai_Ci_symmetric} in~\eqref{eq:V_i_dot}, we get
\begin{align}
\dot V_i
&=
z_1\dot\gamma_j
+
D_i\phi_i
-
u_i(z_1+D_i) \nonumber\\
&=
z_1\dot\gamma_j
+
C_i
-
u_iA_i .
\label{eq:Vdot_symmetric_before_control}
\end{align}
Substituting eqn.~\eqref{eq:guidance_law_symmetric} into
\eqref{eq:Vdot_symmetric_before_control} yields
\begin{equation}
\label{eq:Vdot_symmetric_negative}
\dot V_i
=
-\kappa_i(z_1^2+\zeta_2^2)
<0 \qquad \forall(z_1,\zeta_2)\neq (0,0).
\end{equation}

Since $(z_1(0),z_2(0))\in\mathcal A_i$ and $\dot V_i\le0$, we have $V_i(z_1(t),z_2(t))\le V_i(z_1(0),z_2(0))$ for all $t\ge0$ and hence $\mathcal A_i$ is positively invariant. Eqn.~\eqref{eq:A_lower_bound_symmetric} implies that $A_i$ is also bounded away from zero on the entire trajectory. Therefore, the guidance law is well-defined for all $t\ge0$.

Because $B_i(z_2)\to+\infty$ as $z_2\to\pm\eta_i(j)$ and the trajectory remains in $\mathcal A_i\subset\mathcal D_i$, the constraint $|z_2(t)|<\eta_i(j)$ is preserved. Also, by construction, every physical relative configuration associated with $\mathcal A_i$ satisfies $r_{ij}>2R_s$. Hence, $r_{ij}(t)>2R_s$ for all $t\ge0$, and collision avoidance is preserved.
Finally, since the trajectory remains in the set $\mathcal A_i$ and $\dot V_i$ is negative definite in $(z_1,\zeta_2)$, standard Lyapunov arguments imply $z_1(t)\to0$ and $\zeta_2(t)\to0$.
Thus, $e_{i_1}^j(t)\to\bar e_{i_1}^j$ and $e_{i_2}^j(t)\to\bar e_{i_2}^j$.
By Lemma~\ref{lemm:unique_equilibrium}, follower $i$ asymptotically circumnavigates the same target as agent $j$ with the same angular speed.

We now extend the argument to the entire static graph. Consider any directed path
\[
\ell_m \leftarrow f_1 \leftarrow \cdots \leftarrow f_q 
\]
terminating at a leader $\ell_m$. Since the leader converges to stable circumnavigation by Theorem~\ref{thm:leader_guidance}, the pairwise result applies first to $f_1$, then to $f_2$, and so on. Assumption~\ref{assump:infor_flow} guarantees that every follower has such a directed path to at least one leader. Therefore, every follower asymptotically circumnavigates the target and converges to the angular speed of the leader at the end of its directed path in $\mathcal{G}$.
\end{proof}

\begin{remark}
The statement of Theorem~\ref{thm:main_guidance} uses the admissible set $\mathcal A_i$ instead of defining feasible initial conditions directly through the Lyapunov function. The inclusion $\mathcal A_i\subset\mathcal D_i$ gives the concrete denominator bound $|A_i|\ge (\oplus-1)|A_i^\star|/\oplus$, so the guidance law is well-defined on the admissible set.
\end{remark}

With Theorem~\ref{thm:main_guidance}, we have established a BLF-based distributed framework for static interaction graphs: the error dynamics converge asymptotically, and collision avoidance is preserved for all initial conditions chosen inside the admissible set.
\subsection{Circumnavigation under time-varying interaction topologies}
\label{subsec:time-varying_graph}
As the agents are mobile, their relative distances change over time, leading the interaction topology to evolve accordingly. We therefore investigate the conditions required to achieve circumnavigation of a stationary target under such time-varying interactions. Broadly, the interaction topology may vary in two ways: (i) the number of nodes in $\mathcal{G}(t)$ remains fixed while the edge set changes, and (ii) nodes may enter or leave the system, resulting in a varying number of agents and edges. We first analyse the case of time-varying graphs with a fixed number of nodes.
\begin{figure*}[ht]
     \centering
        \begin{subfigure}[b]{0.24\textwidth}
        \centering
        \includegraphics[width=0.97\linewidth]{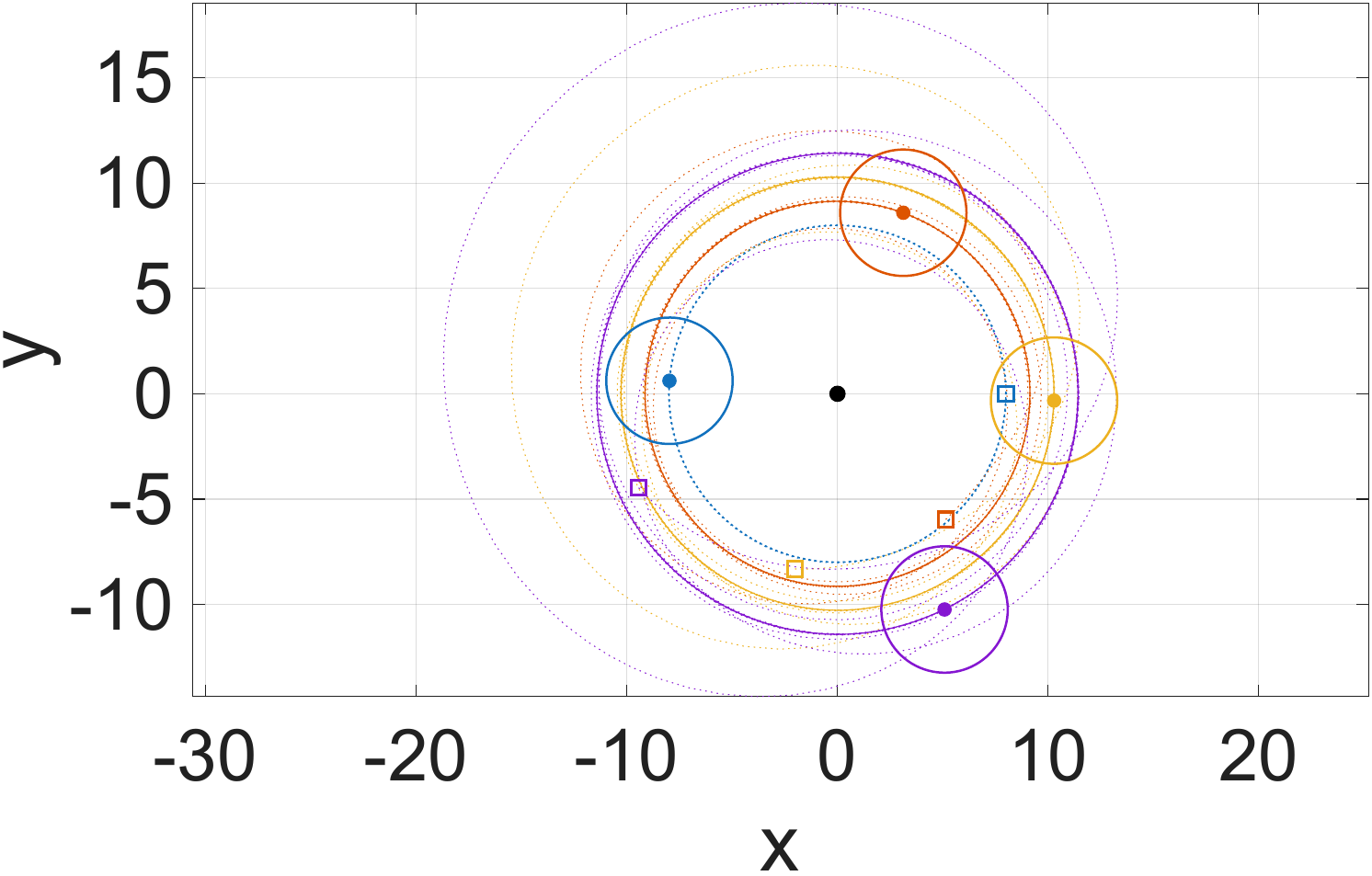} 
        \caption{Trajectories}
        \label{fig:Case1_trajectory.pdf}
     \end{subfigure}
     \begin{subfigure}[b]{0.24\textwidth}
        \centering
        \includegraphics[width=0.99\linewidth]{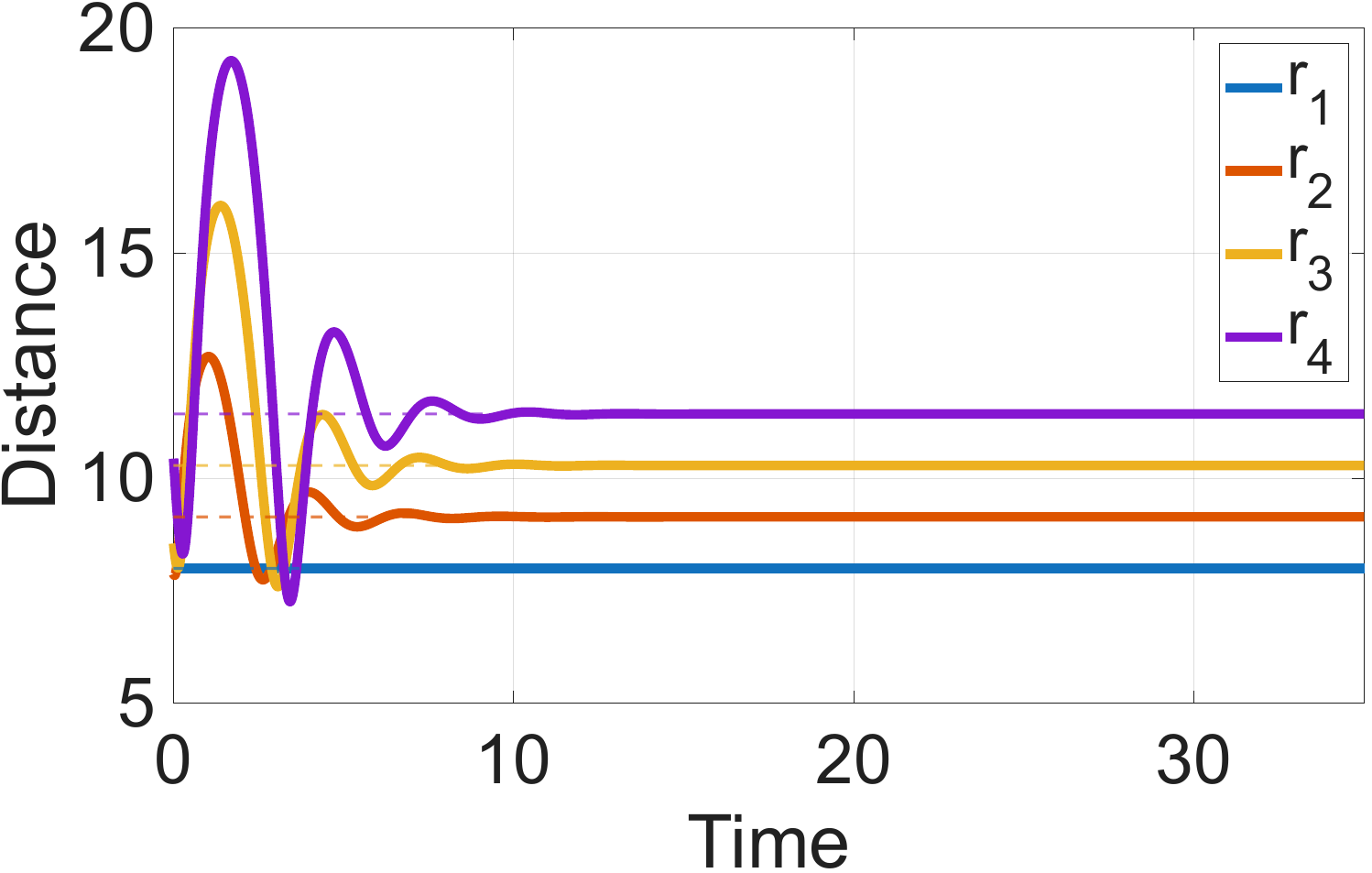} 
        \caption{Distances from the target}
        \label{fig:Case1_distance_from_target.pdf}
     \end{subfigure}
     \begin{subfigure}[b]{0.24\textwidth}
        \centering    
        \includegraphics[width=0.96\linewidth]{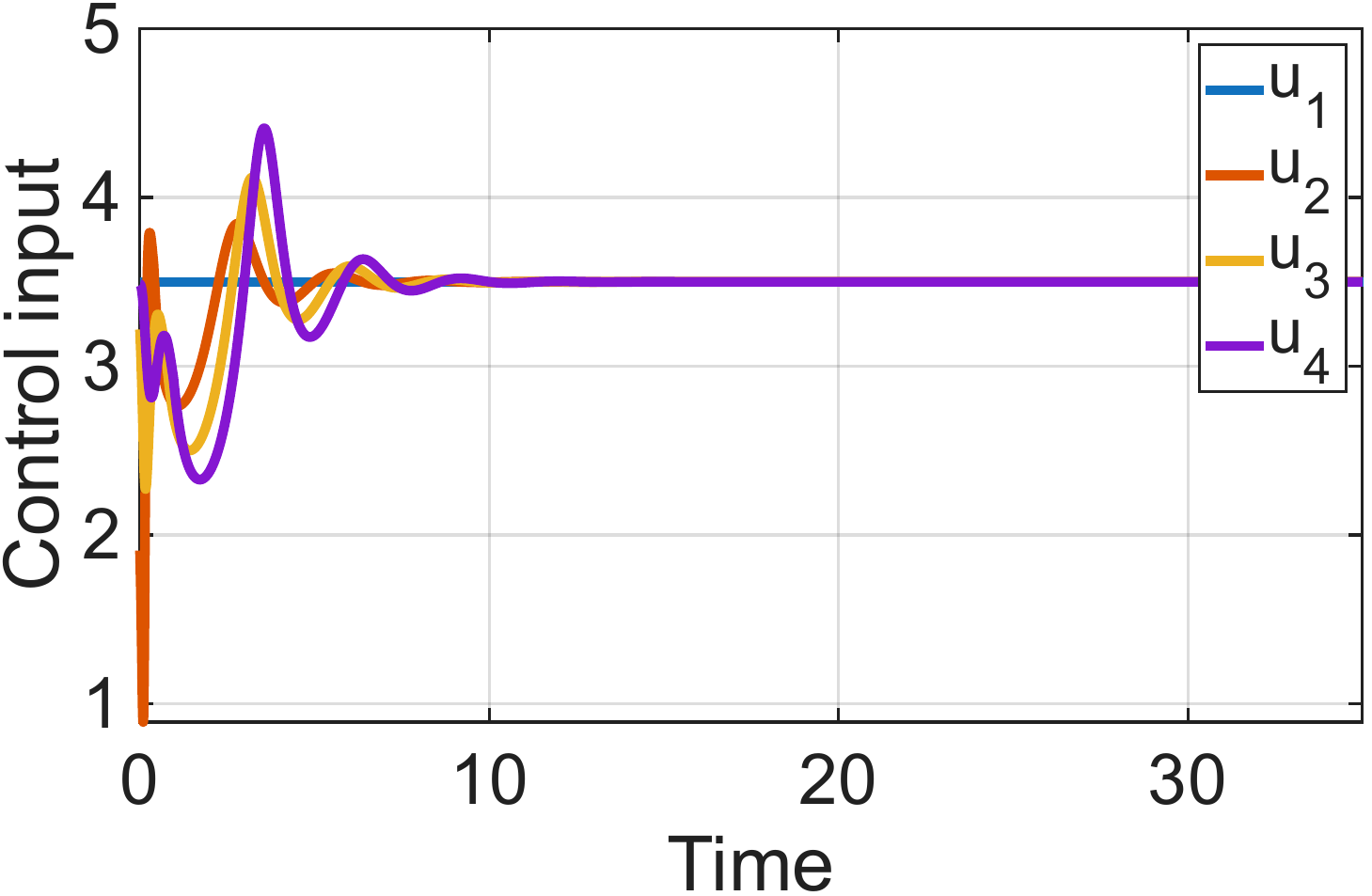}
        \caption{Control inputs}
        \label{fig:Case1_control_inputs.pdf}
     \end{subfigure}
     \begin{subfigure}[b]{0.24\textwidth}
    \centering 
    \includegraphics[width=0.97\linewidth]{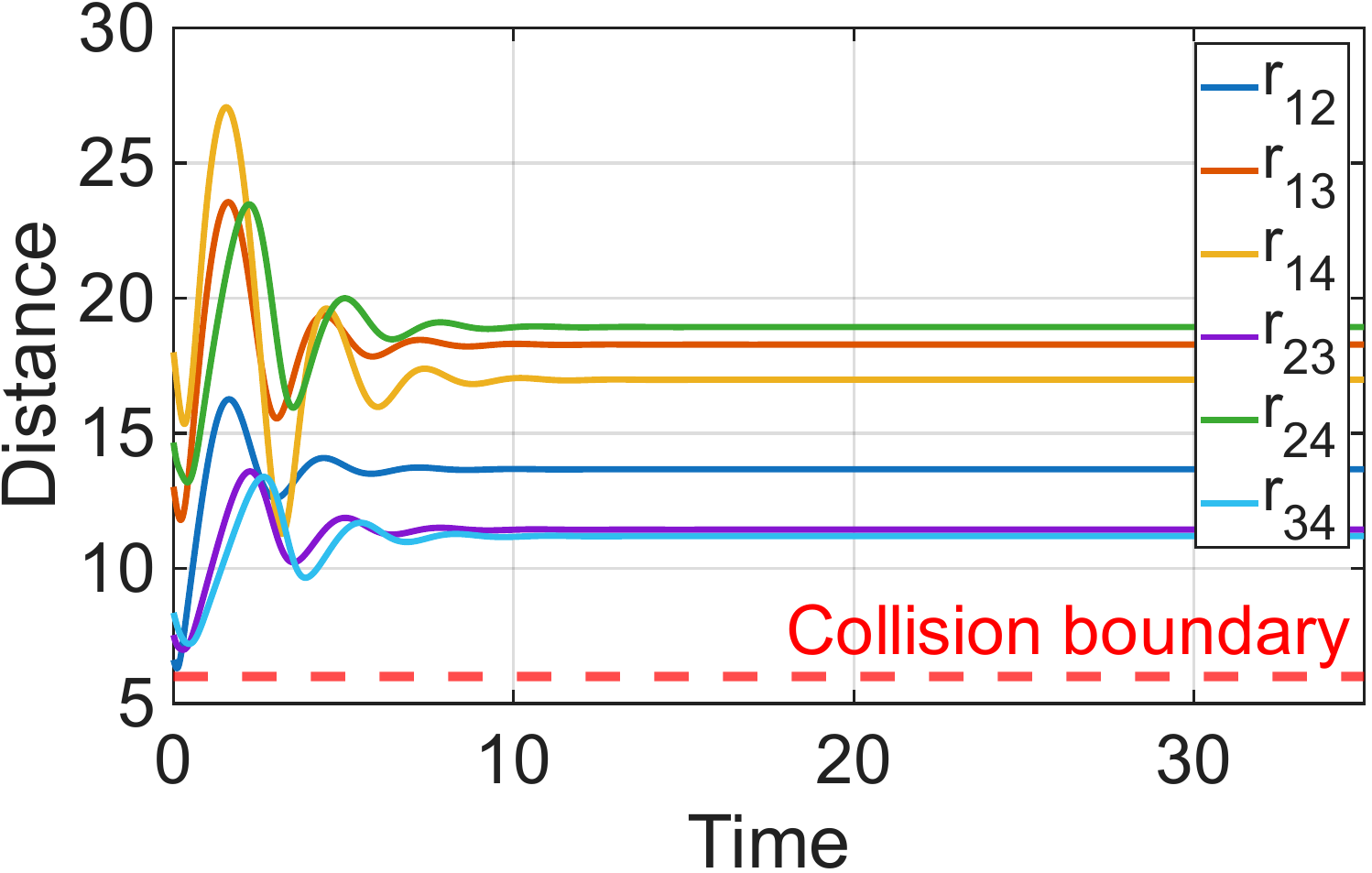}
    \caption{Inter-agent distances}
    \label{fig:Case1_all_pair_distances.pdf}
     \end{subfigure}
     \caption{Case $1$ (small circles in subfigure (a) represent safety radius)}
     \label{fig:Case_1_combined}
\end{figure*}
\begin{figure*}[ht]
     \centering
     \begin{subfigure}[b]{0.24\textwidth}
        \centering    
        \includegraphics[width=0.94\linewidth]{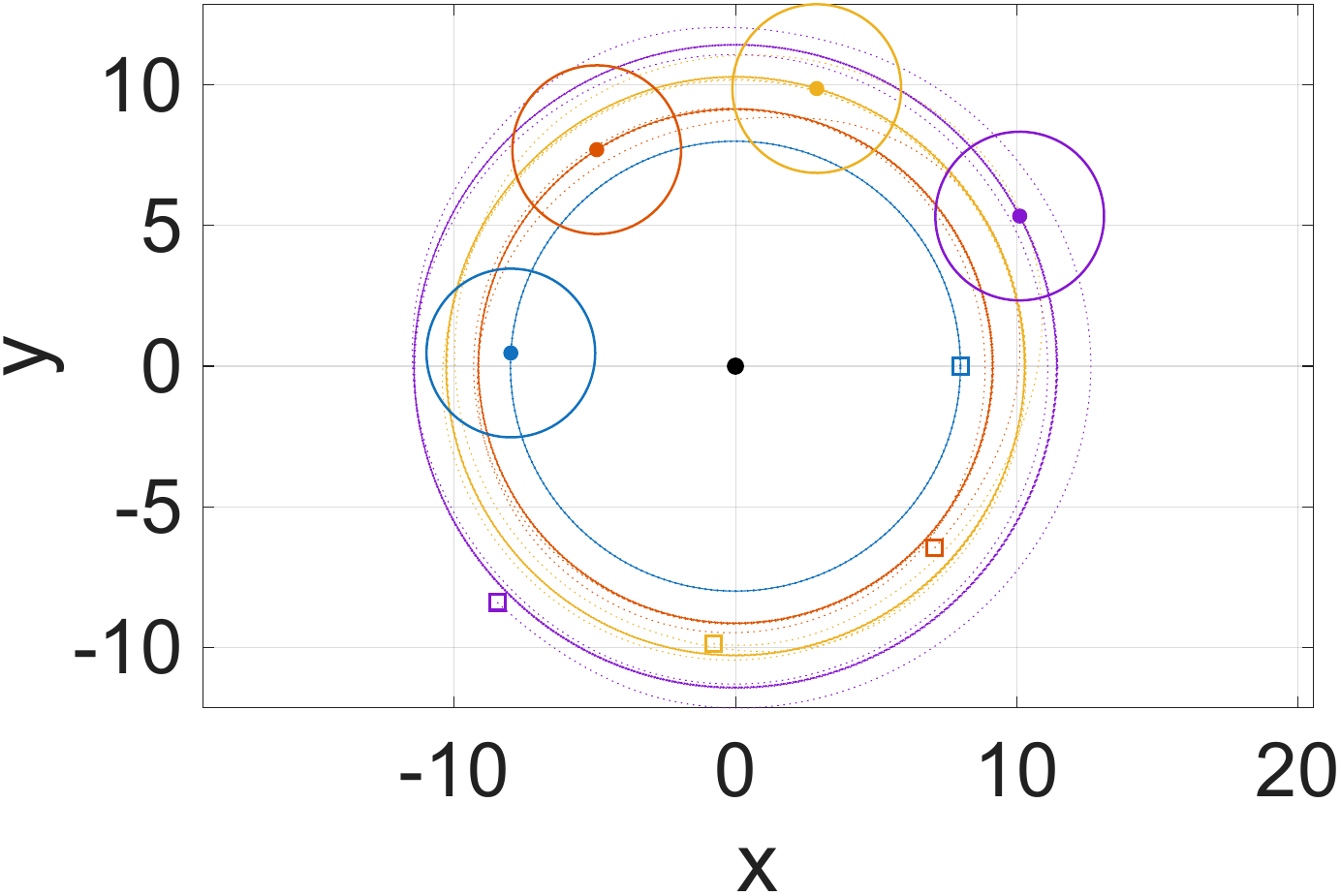}
        \caption{Trajectories}
        \label{fig:Trajectories_case_2.pdf}
     \end{subfigure}
     \begin{subfigure}[b]{0.24\textwidth}
        \centering
        \includegraphics[width=1.1\linewidth]{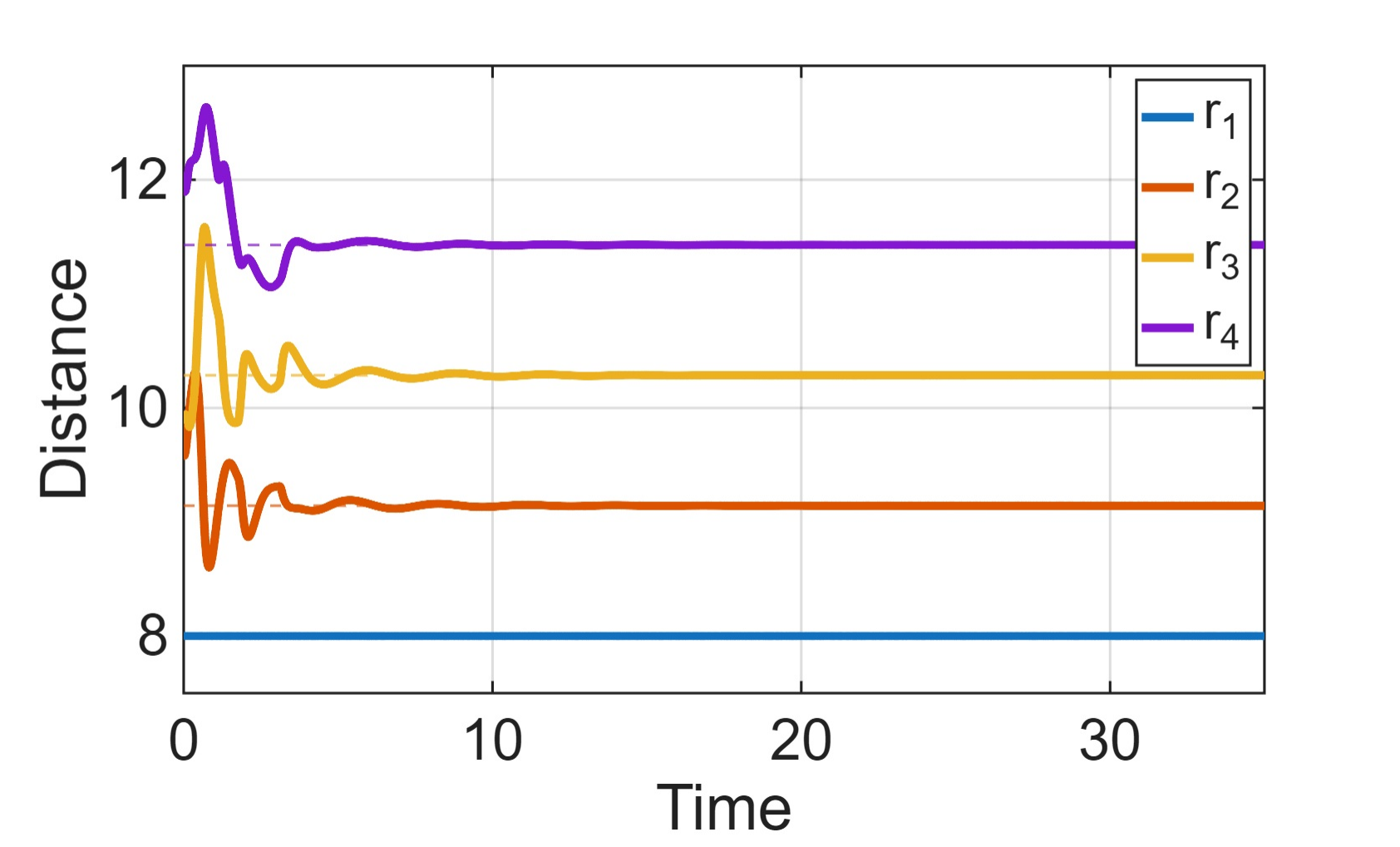} 
        \caption{Distances from the target}
        \label{fig:Distance_from_target_case_2.pdf}
     \end{subfigure}
     \begin{subfigure}[b]{0.24\textwidth}
        \centering    
        \includegraphics[width=1.1\linewidth]{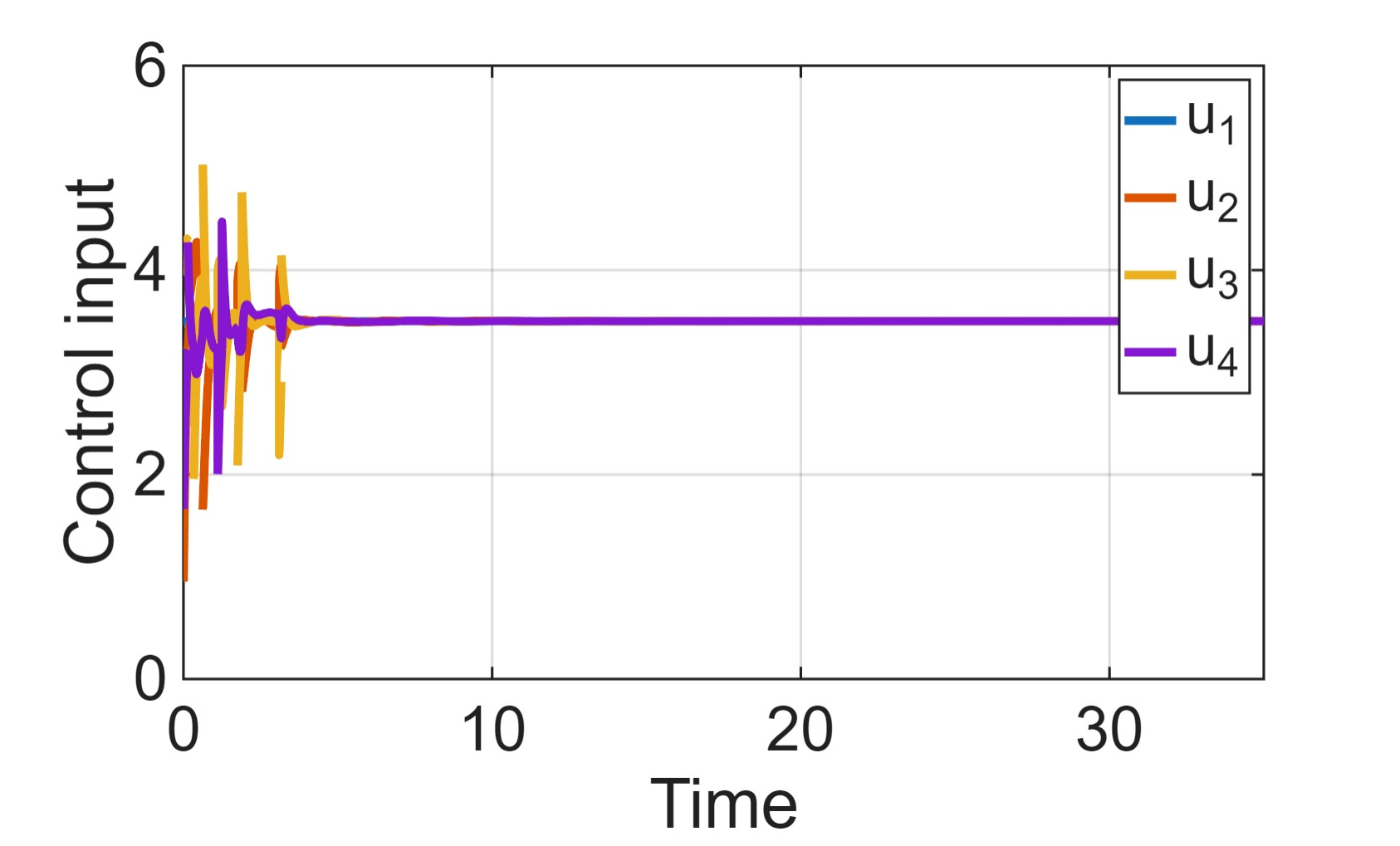}
        \caption{Control inputs}
        \label{fig:Control_inputs_case_2.pdf}
     \end{subfigure}
     \begin{subfigure}[b]{0.24\textwidth}
    \centering 
    \includegraphics[width=1.1\linewidth]{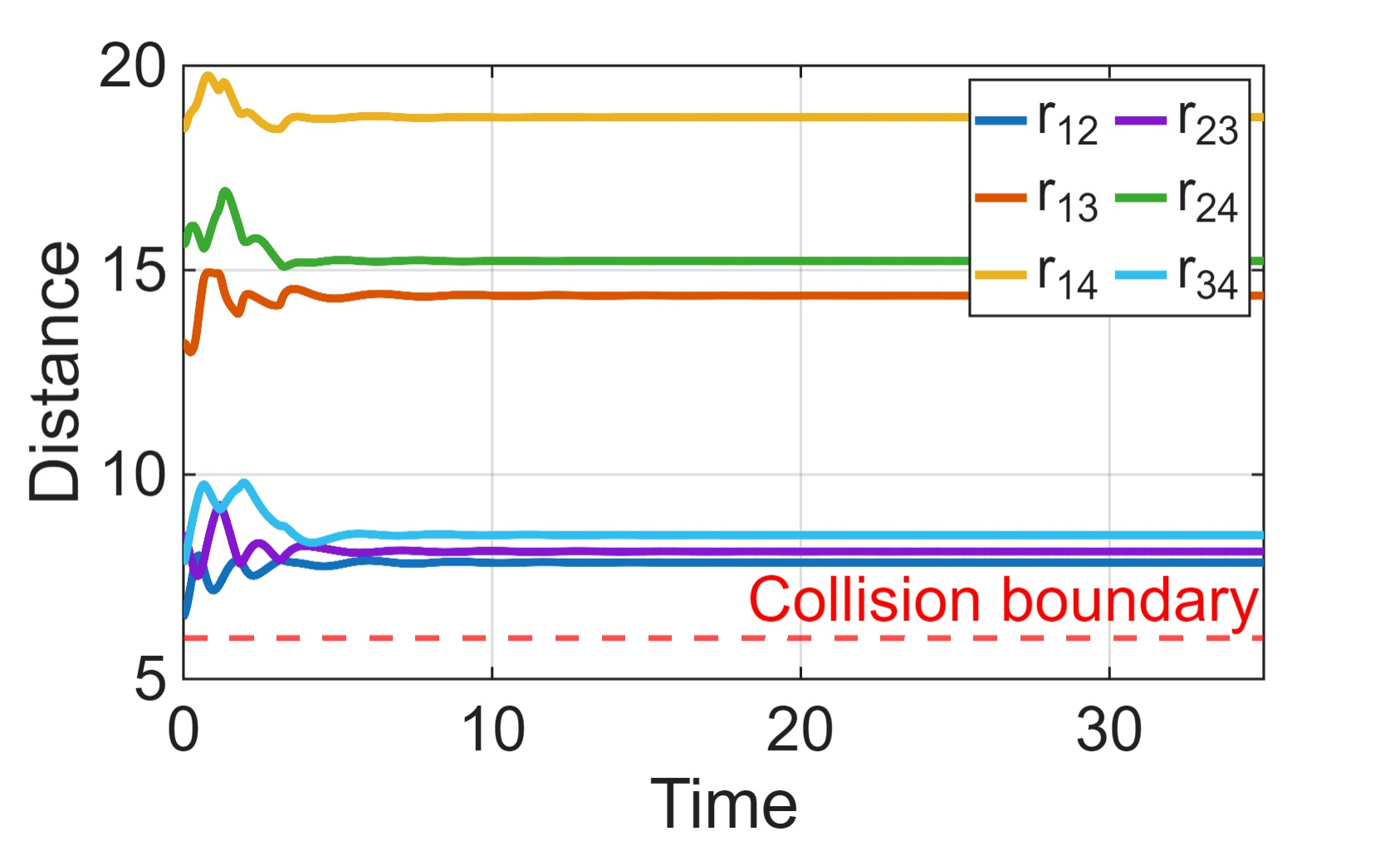}
    \caption{Inter-agent distances}
    \label{fig:Interagent_distances_case_2.pdf}
     \end{subfigure}
     \caption{Case $2$ (small circles in subfigure (a) represent safety radius)}
     \label{fig:Case_2_combined}
\end{figure*}
\begin{figure*}[ht]
     \centering
     \begin{subfigure}[b]{0.24\textwidth}
    \centering 
    \includegraphics[width=0.9\linewidth]{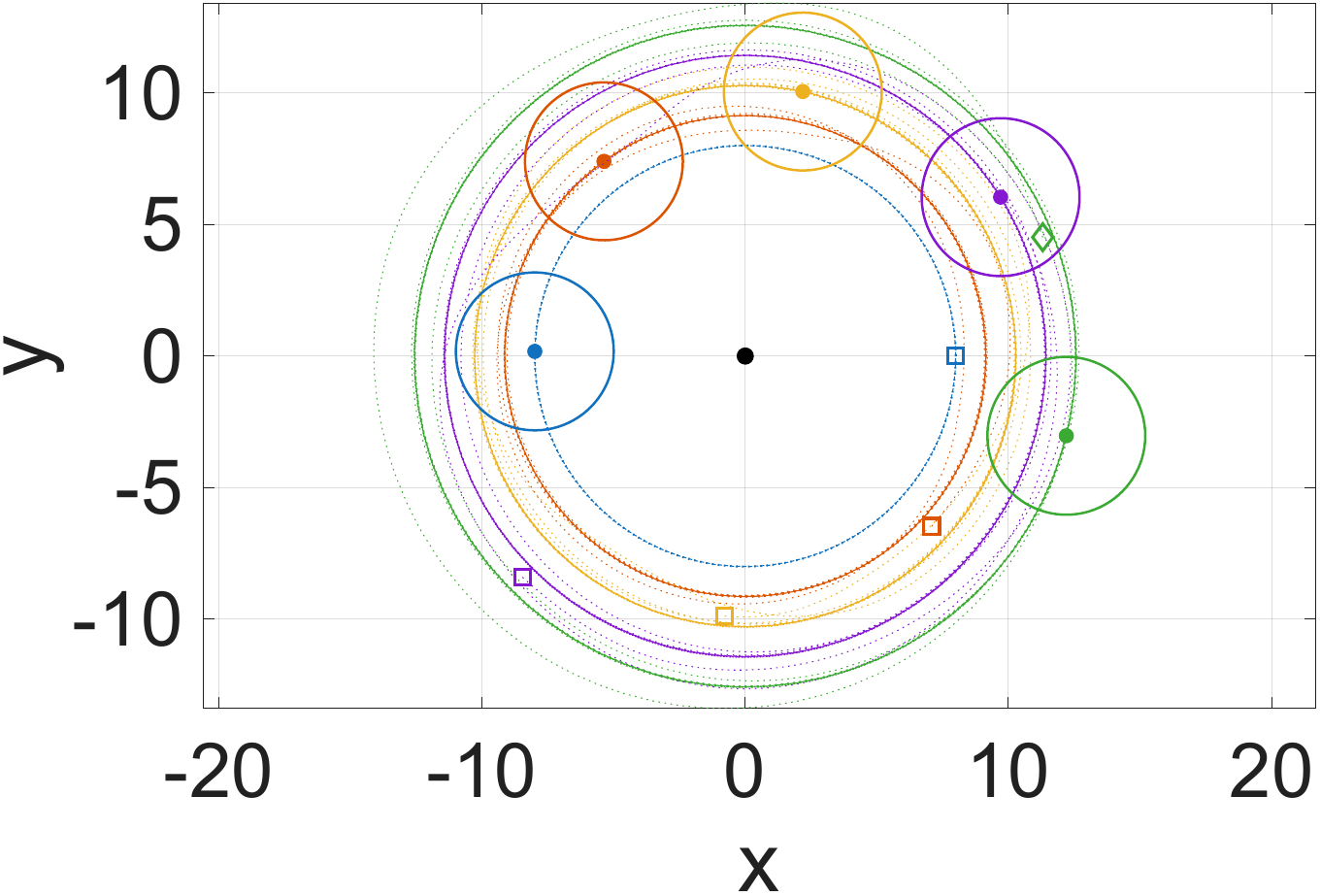}
    \caption{Trajectories}
    \label{fig:Trajectories_case_3.pdf}
     \end{subfigure}
     \begin{subfigure}[b]{0.24\textwidth}
        \centering
        \includegraphics[width=0.97\linewidth]{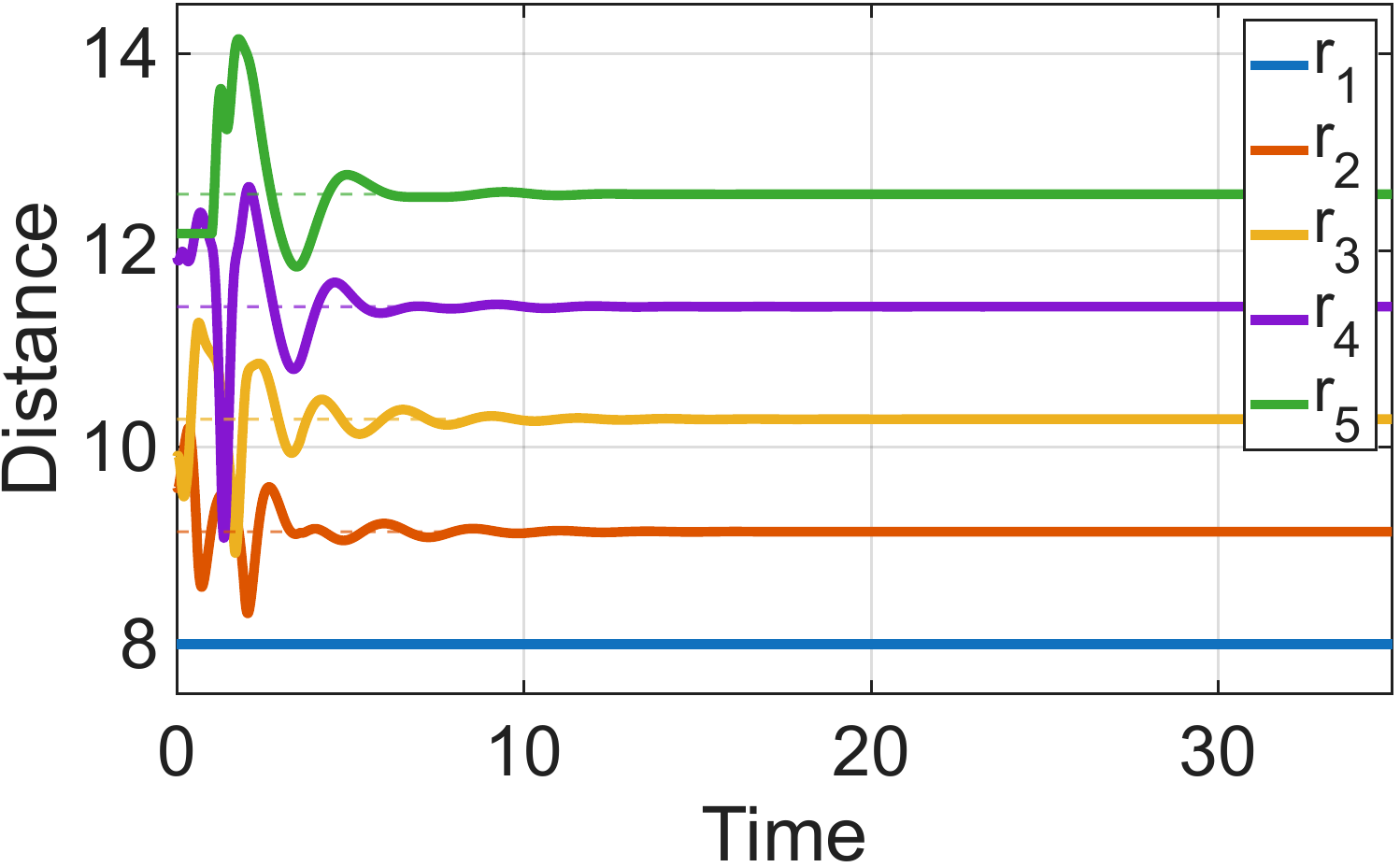} 
        \caption{Distances from the target}
        \label{fig:Distance_from_the_target_case_3.pdf}
     \end{subfigure}
     \begin{subfigure}[b]{0.24\textwidth}
        \centering    
        \includegraphics[width=0.95\linewidth]{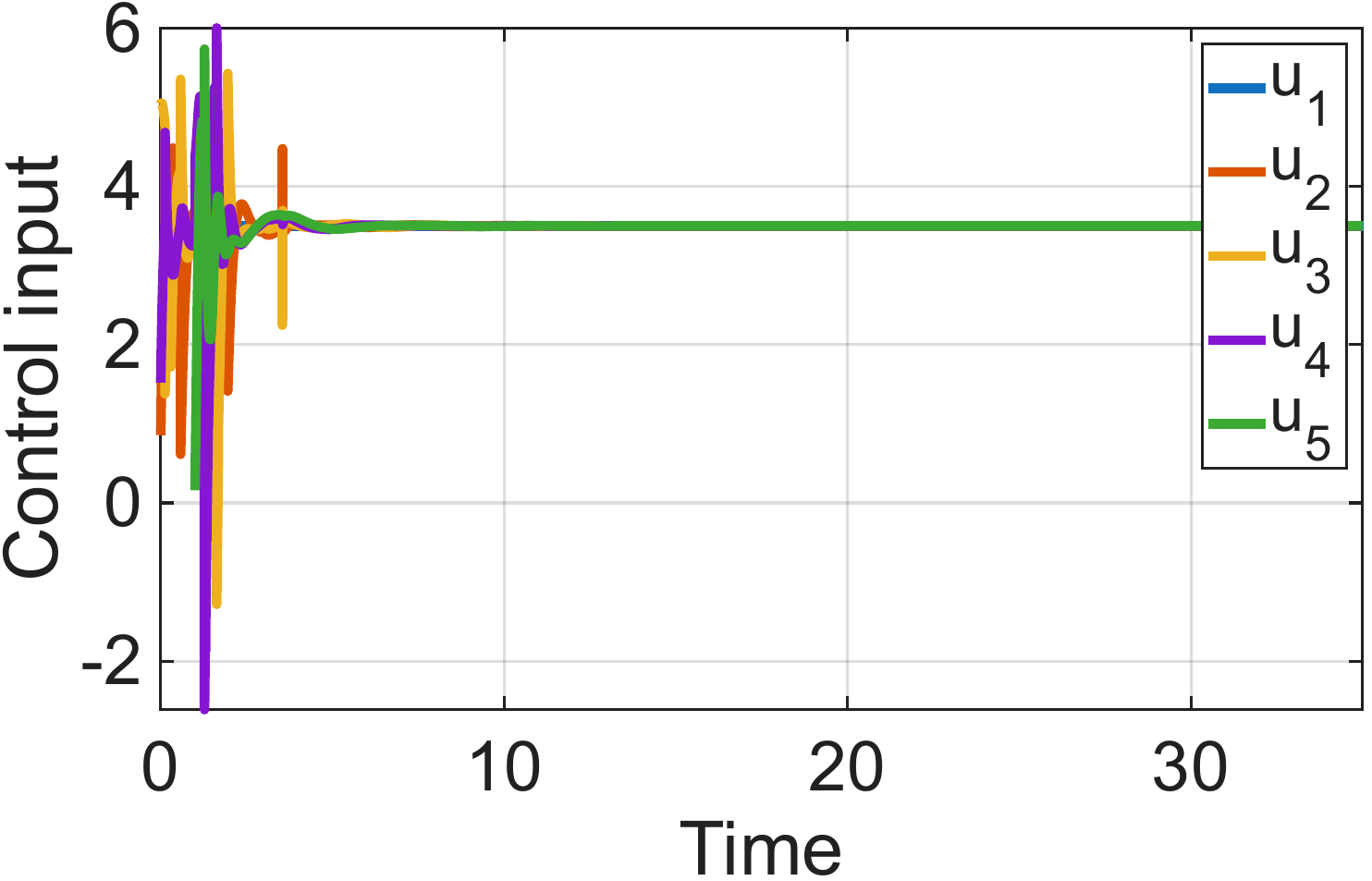}
        \caption{Control inputs}
        \label{fig:Control_inputs_case_3.pdf}
     \end{subfigure}
     \begin{subfigure}[b]{0.24\textwidth}
    \centering 
    \includegraphics[width=1.05\linewidth]{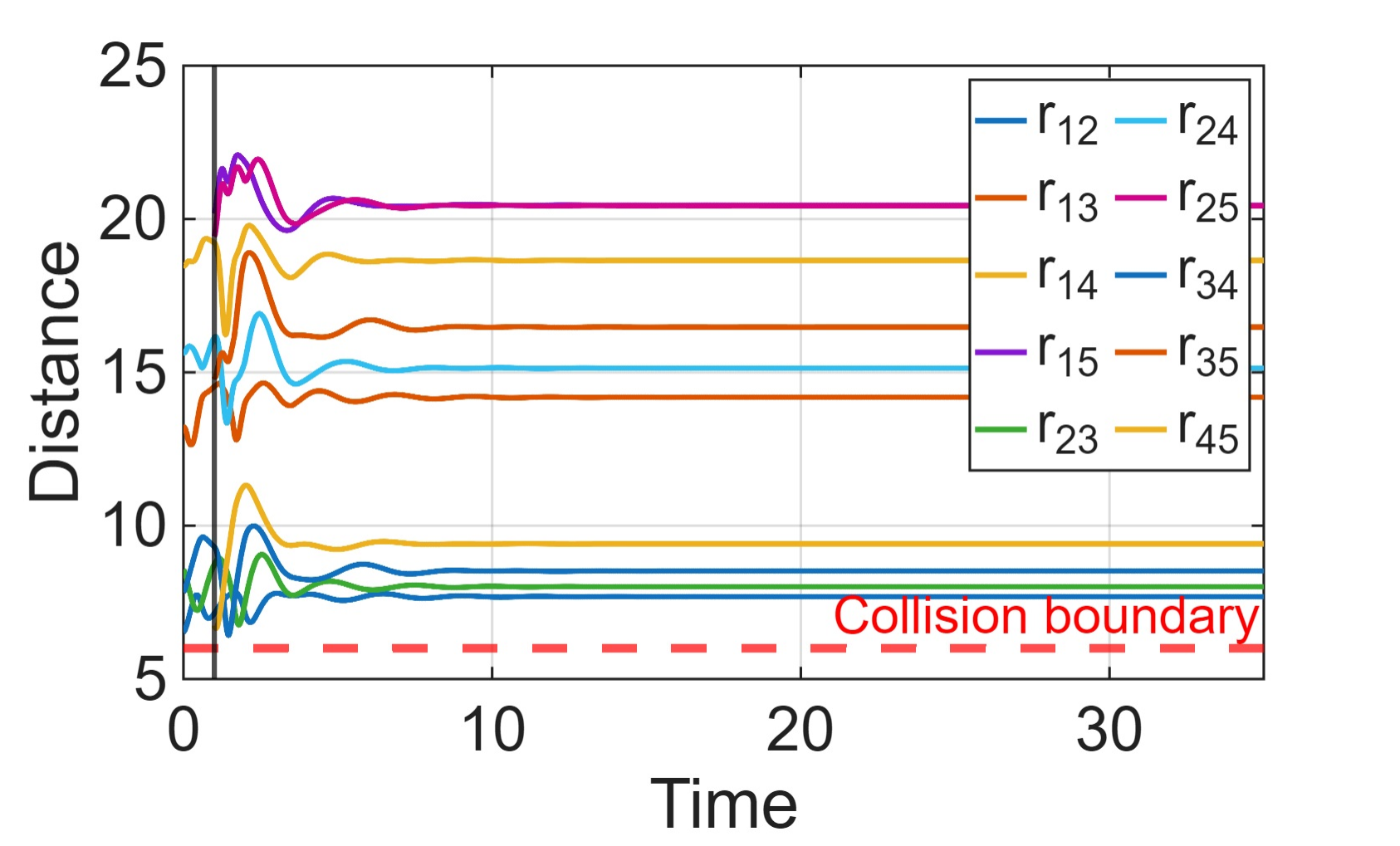}
    \caption{Inter-agent distances}
    \label{fig:Inter-agent_distances_case_3.pdf}
     \end{subfigure}
     \caption{Case $3$ (small circles in subfigure (a) represent safety radius)}
     \label{fig:Case_3_combined}
\end{figure*}
\begin{figure*}[ht]
     \centering
     \begin{subfigure}[b]{0.24\textwidth}
    \centering 
    \includegraphics[width=0.9\linewidth]{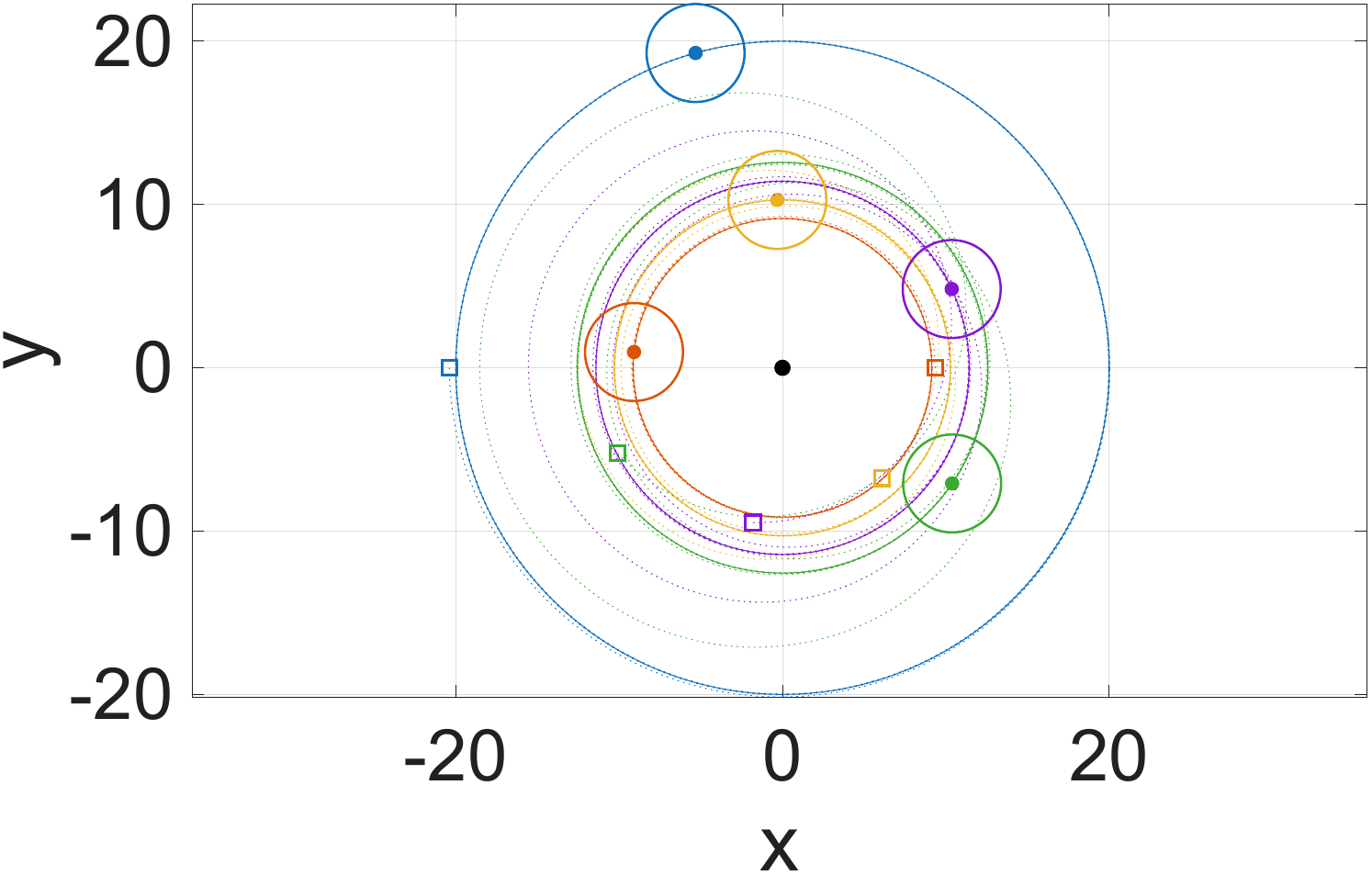}
    \caption{Trajectories}
    \label{fig:Trajectories_case_4.pdf}
     \end{subfigure}
     \begin{subfigure}[b]{0.24\textwidth}
        \centering
        \includegraphics[width=1.05\linewidth]{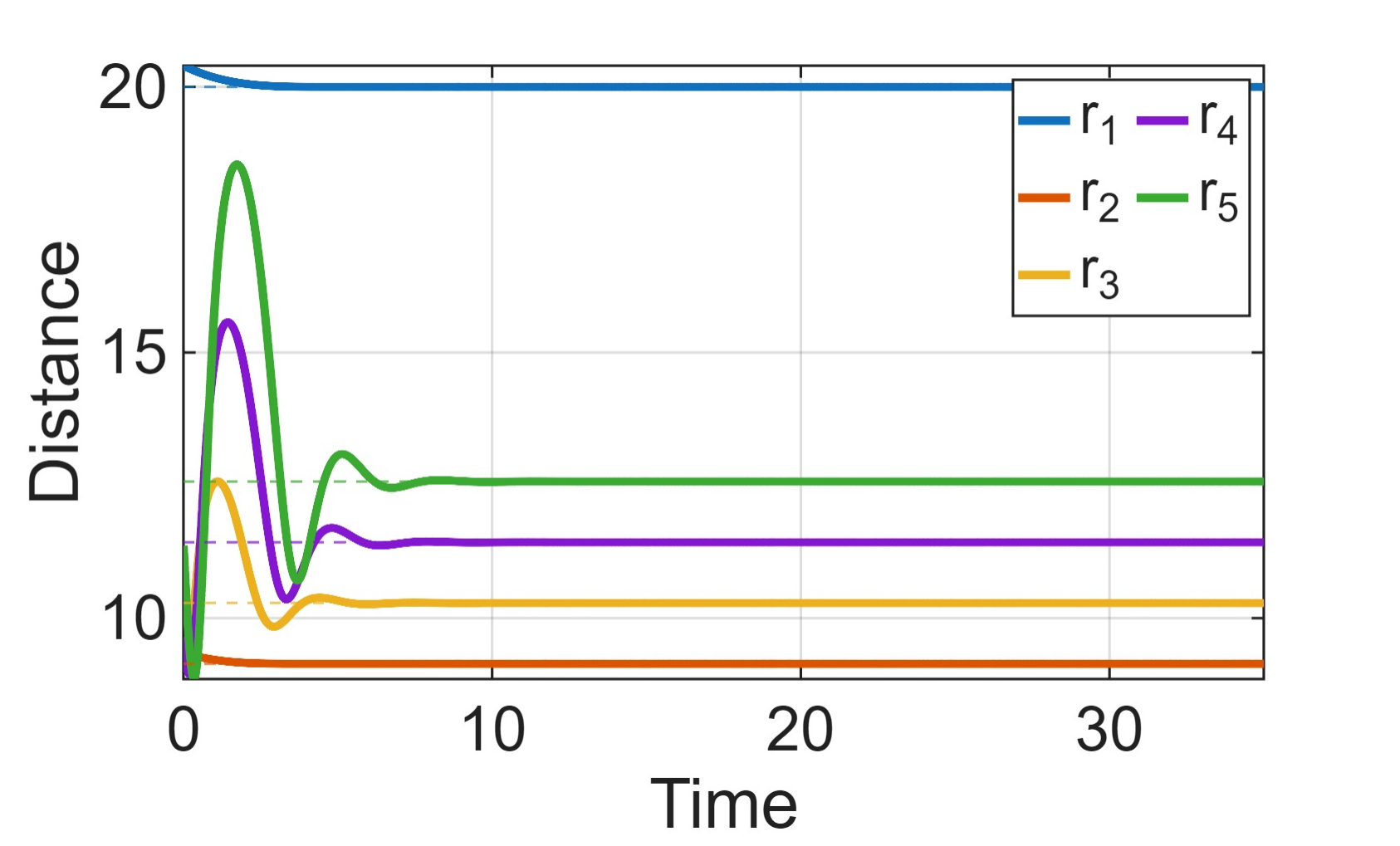} 
        \caption{Distances from the target}
        \label{fig:Distance_from_the_target_case_4.pdf}
     \end{subfigure}
     \begin{subfigure}[b]{0.24\textwidth}
        \centering    
        \includegraphics[width=1.05\linewidth]{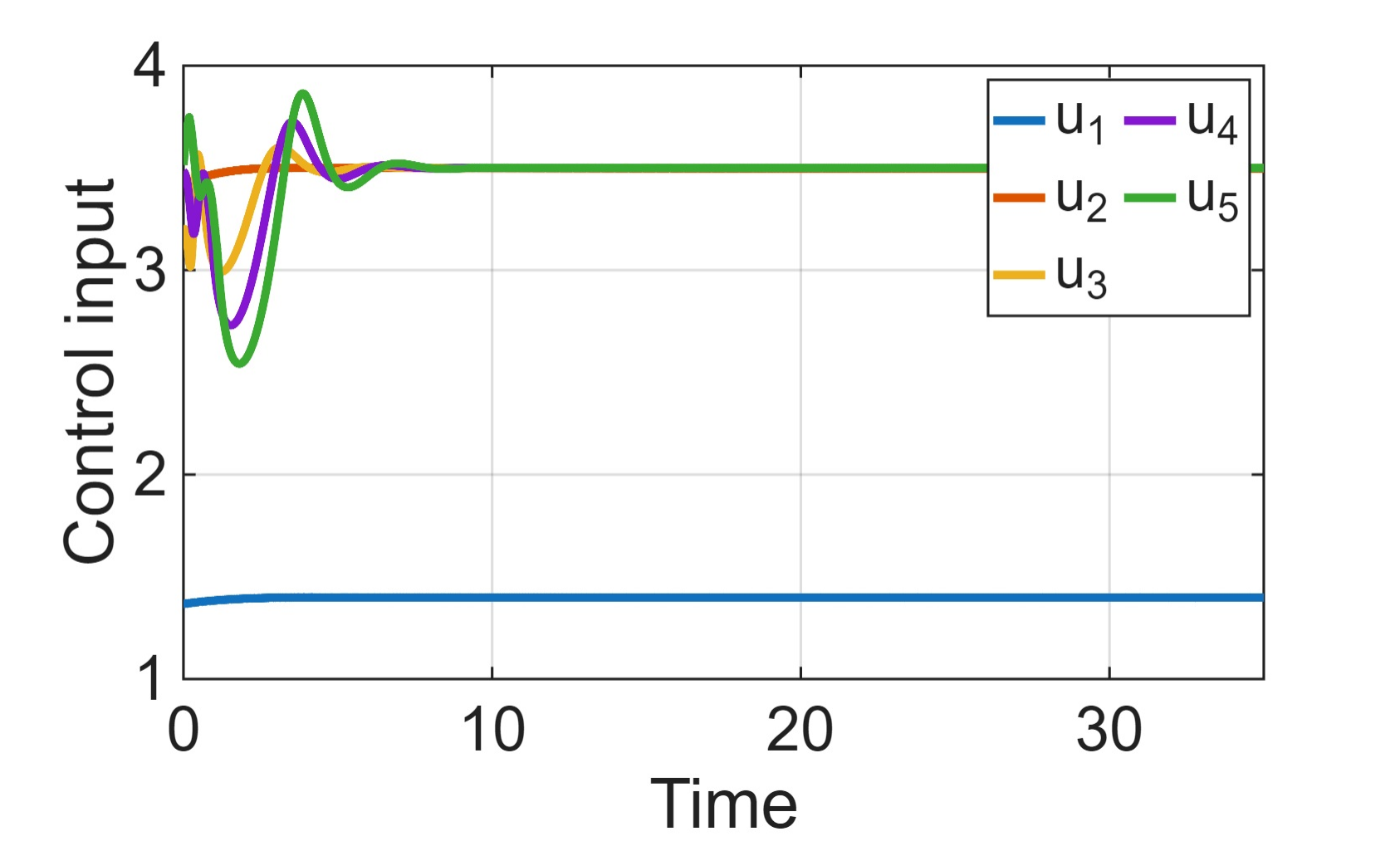}
        \caption{Control inputs}
        \label{fig:Control_inputs_case_4.pdf}
     \end{subfigure}
     \begin{subfigure}[b]{0.24\textwidth}
    \centering 
    \includegraphics[width=1.05\linewidth]{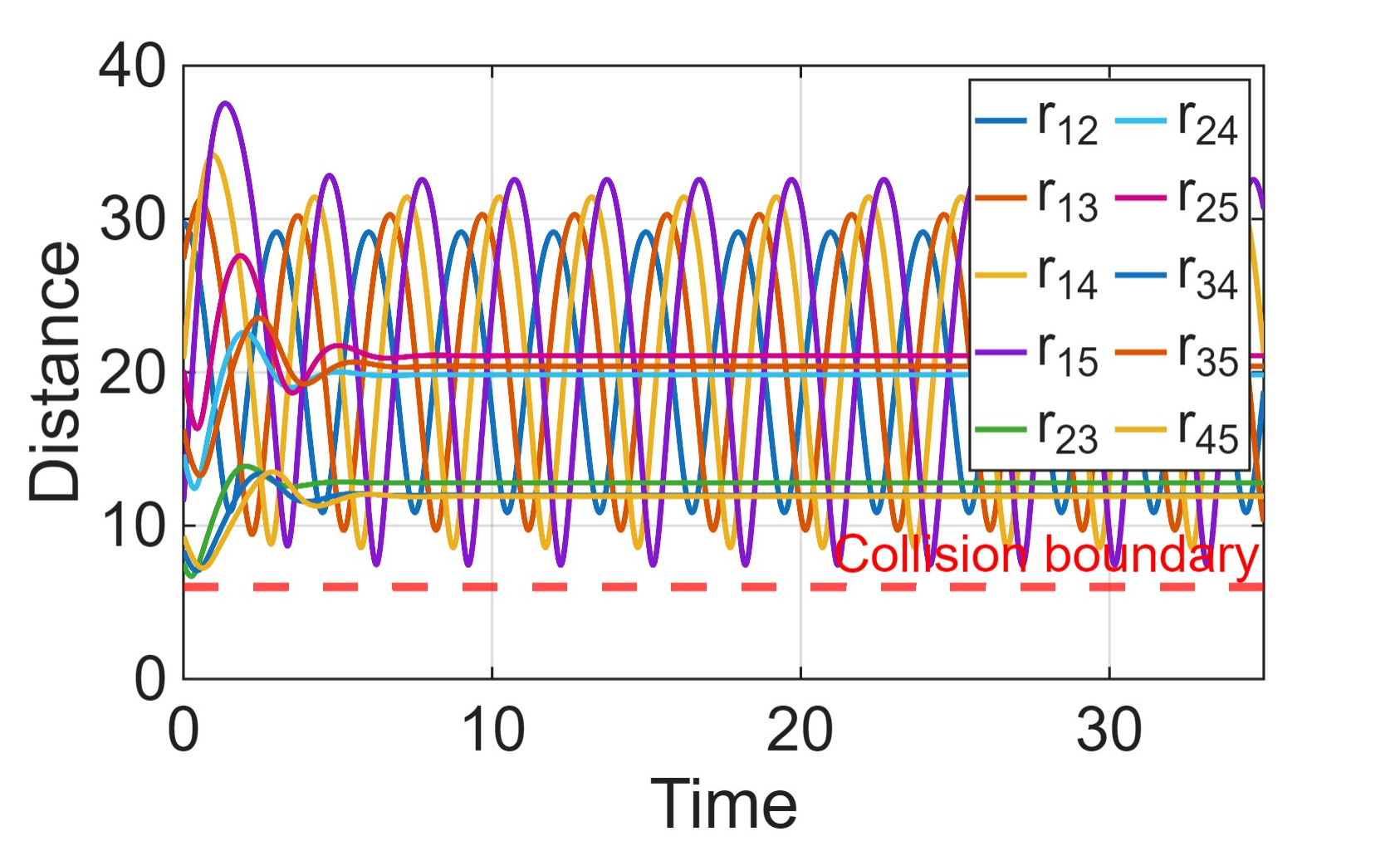}
    \caption{Inter-agent distances}
    \label{fig:Inter-agent_distances_case_4.pdf}
     \end{subfigure}
     \caption{Case $4$ (small circles in subfigure (a) represent safety radius)}
     \label{fig:Case_4_combined}
\end{figure*}

\subsubsection{Time-varying graphs with fixed number of nodes}
\label{subsubsec:fixed_nodes}
Since the number of agents in the system is finite, the interaction topology cannot vary continuously over time. Specifically, a change in the nearest neighbour of an agent requires a finite amount of time for another agent to move closer. Consequently, the interaction topology can evolve only at discrete time instants. Furthermore, assuming that no nodes are added to or removed from the system, we impose the following assumption on the time-varying graph $\mathcal{G}$.

\begin{assumption}
\label{assump:switching}
The graph $\mathcal{G}(t)$ is piecewise static and right-continuous. Its switching times $\{t_k\}_{k\in\mathbb{N}}$ satisfy $0=t_0<t_1<t_2<\cdots$, and have no finite accumulation point. Moreover, the system state still satisfies all required constraints at every switching instant, and there exists a finite time $T_s \ge 0$ and a static graph $\bar{\mathcal{G}}$ such that $\mathcal{G}(t)=\bar{\mathcal{G}}, \forall t \ge T_s$ with $\bar{\mathcal{G}}$ satisfying Assumption~\ref{assump:infor_flow}.
\end{assumption}

With the help of Theorem~\ref{thm:main_guidance}, and Assumptions~\ref{assump:infor_flow} and~\ref{assump:switching}, we are now in a position to state the main result corresponding to time-varying interaction graphs in the absence of node addition or removal.

\begin{thm}
\label{Theorem:consensus_dynamic_connected}
Consider a system of $n$ agents under the guidance law given in Theorem~\ref{thm:main_guidance}. Suppose Assumptions~\ref{assump:infor_flow} and~\ref{assump:switching} hold. Then, for every admissible initial condition, the collision-free set remains forward invariant for all $t \ge 0$, and every agent asymptotically circumnavigates the target. In particular, each follower converges to the angular speed of the leader at the terminus of its directed path in the limiting graph $\bar{\mathcal{G}}$.
\end{thm}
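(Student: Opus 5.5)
The plan is to reduce the time-varying case to a finite concatenation of static cases, then invoke Theorem~\ref{thm:main_guidance} on the final interval $[T_s,\infty)$.

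\textbf{Step 1: switching partition.} By Assumption~\ref{assump:switching}, the switching instants $\{t_k\}$ have no finite accumulation point, and the graph is constant for $t\ge T_s$. Hence only finitely many switches, say $t_0<t_1<\cdots<t_N\le T_s$, occur before $T_s$. On each interval $[t_k,t_{k+1})$ the graph $\mathcal G(t)$ equals some static graph $\mathcal G_k$, so every follower $i$ has a fixed out-neighbour $j_k(i)$ and the closed loop is the static closed loop of Theorem~\ref{thm:main_guidance}, with the pair-dependent errors $z_1,z_2$ and the Lyapunov function $V_i^{(k)}$ of eqn.~\eqref{eq:Vi_candidate_symmetric} built from $(i,j_k(i))$. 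The right-hand side is piecewise continuous in $t$ and locally Lipschitz in the state on each piece. Solutions therefore exist on each interval, and they concatenate continuously because the physical states $(P_i,\gamma_i)$ are continuous across $t_k$; only the error coordinates are re-indexed.

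\textbf{Step 2: forward invariance by induction on $k$.} Assume that at $t_k$ the state lies in $\mathcal X_{\mathrm{adm}}$ and that, for each follower, $(z_1(t_k),z_2(t_k))\in\mathcal A_i$ relative to its new out-neighbour. The second requirement is exactly what ``the system state still satisfies all required constraints at every switching instant'' in Assumption~\ref{assump:switching} provides. Theorem~\ref{thm:main_guidance}, applied on $[t_k,t_{k+1})$, then gives $\dot V_i^{(k)}\le 0$, so $\mathcal A_i$ is invariant and $A_i$ stays bounded away from zero by eqn.~\eqref{eq:A_lower_bound_symmetric}. Hence $u_i$ is well defined and bounded, no finite escape occurs, and $r_{ij}>2R_s$ holds on the interval.

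For leaders, safety on each interval comes from Theorem~\ref{thm:leader_guidance}, whose proof does not depend on the graph. Pairs that are not graph edges need the same argument as in Theorem~\ref{thm:main_guidance}: each such pair, once it becomes relevant, corresponds to an edge of some $\mathcal G_k$, because nearest-neighbour selection makes the closest agent the out-neighbour. Iterating over the finitely many intervals gives invariance on $[0,T_s]$, and Theorem~\ref{thm:main_guidance} gives it on $[T_s,\infty)$.

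\textbf{Step 3: convergence.} On $[T_s,\infty)$ the graph is $\bar{\mathcal G}$, which satisfies Assumption~\ref{assump:infor_flow}, and the state at $T_s$ is admissible by Step 2. Leaders reach their circles in finite time, independently of the graph. Restarting time at $T_s$ and applying Theorem~\ref{thm:main_guidance} along each directed path $\ell_m\leftarrow f_1\leftarrow\cdots\leftarrow f_q$ of $\bar{\mathcal G}$ gives $z_1,\zeta_2\to0$ for each edge. Lemma~\ref{lemm:unique_equilibrium} then implies that each follower circumnavigates the target at the angular speed of the terminal leader.

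\textbf{Main obstacle.} The cascade along the path is the delicate step. Theorem~\ref{thm:main_guidance} assumes the out-neighbour is already on its circle, but $f_1$ only converges asymptotically. For $f_2$ I would treat $\dot\gamma_{f_1}-\bar\omega$ as a vanishing perturbation in the $\dot V$ identity, eqn.~\eqref{eq:Vdot_symmetric_negative}, where it enters through $z_1\dot\gamma_j$. A converging-input converging-state argument should work here, since $V$ is bounded on the compact sublevel set inside $\mathcal A_i$; the alternative is to note that leaders settle in finite time and proceed by induction on path length. A second point to state carefully is that the invariance transferred across switches relies entirely on the switching-instant clause of Assumption~\ref{assump:switching}. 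The proof cannot derive it, because a new out-neighbour's errors need not lie in $\mathcal A_i$, so I would cite that clause explicitly.
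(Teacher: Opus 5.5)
Your proposal follows the paper's proof: partition time into the static intervals $[t_k,t_{k+1})$, apply the static analysis on each, use the switching-instant clause of Assumption~\ref{assump:switching} to continue across switches (no finite accumulation point gives existence for all $t\ge0$), and invoke Theorem~\ref{thm:main_guidance} on $[T_s,\infty)$ under $\bar{\mathcal G}$. One correction to your ``main obstacle'': the term $z_1\dot\gamma_j$ is cancelled exactly by the feedforward in eqn.~\eqref{eq:guidance_law_symmetric}, so $\dot V_i=-\kappa_i(z_1^2+\zeta_2^2)$ holds whatever the out-neighbour does and no vanishing-perturbation argument is needed there; the real cascade subtlety is applying Lemma~\ref{lemm:unique_equilibrium}, which presumes $j$ is exactly on its circle, and the paper's proof also passes over that step.
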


\begin{proof}
Because $\mathcal{G}(t)$ is piecewise static, there exists a partition of the time axis into intervals $[t_k,t_{k+1})$ on each of which the graph is fixed. On every such interval, the system is governed by a static interaction graph satisfying Assumption~\ref{assump:infor_flow}; hence, the analysis of Section~\ref{subsec:static_graphs} applies on each interval separately. By Assumption~\ref{assump:switching}, the states are admissible at every switching instant, so the solution can be continued across switches without violating the collision-free constraint. Since the switching times have no finite accumulation point, no fast switching behaviour occurs, and the solution exists for all $t \ge 0$.
It remains to establish asymptotic convergence. By Assumption~\ref{assump:switching}, there exists $T_s<\infty$ such that $\mathcal{G}(t)=\bar{\mathcal{G}}$ for all $t \ge T_s$. Therefore, for all sufficiently large times, the
system evolves under a fixed interaction graph satisfying Assumption~\ref{assump:infor_flow}. Theorem~\ref{thm:main_guidance} then applies on $[T_s,\infty)$ and yields asymptotic convergence of all followers to circumnavigation about the target. Hence, the collision-free set is forward invariant for all time and all agents asymptotically circumnavigate the target.
\end{proof}

Once we proved convergence for time-varying graphs with varying numbers of edges. Now, we also consider the scenario where agents join or leave the system.

\subsubsection{Time-varying graphs with varying number of nodes and edges}
\label{subsubsec:changing_nodes}
Consider a scenario in which nodes may be added to or removed from the system. The following corollary specifies the conditions required to ensure circumnavigation.

\begin{corollary}
\label{cor:thm2}
Consider a system of $n$ agents under the guidance law of Theorem~\ref{thm:main_guidance}. Suppose node-entry and node-exit events occur at isolated times; these event times have no finite accumulation point, the states remain admissible at every event time, and the updated interaction graph after each event satisfies Assumption~\ref{assump:infor_flow}. If, after a finite number of such events, the graph remains piecewise constant and eventually static in the sense of Assumption~\ref{assump:switching}, then the collision-free set remains forward invariant and all active agents asymptotically circumnavigate the target.
\end{corollary}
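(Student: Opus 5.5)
The plan is to reduce Corollary~\ref{cor:thm2} to Theorem~\ref{Theorem:consensus_dynamic_connected}, and through it to Theorem~\ref{thm:main_guidance}, by treating node-entry and node-exit events as a particular kind of switching of $\mathcal{G}(t)$. Let $\{\tau_p\}$ be the event times and $\{t_k\}$ the edge-switching times of Assumption~\ref{assump:switching}. Their union is ordered as $0=s_0<s_1<\cdots$. Since each family has no finite accumulation point, and only finitely many node events occur in any bounded interval, the merged sequence also has no finite accumulation point.

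On each interval $[s_k,s_{k+1})$ the active node set $\mathcal{V}(s_k)$ and the edge set are fixed. So the closed loop is exactly the static-graph system of Section~\ref{subsec:static_graphs}, restricted to the active agents. The hypotheses give us two things on each such interval:
\begin{itemize}
\item the graph satisfies Assumption~\ref{assump:infor_flow};
\item the state at $s_k$ is admissible.
\end{itemize}
Theorem~\ref{thm:main_guidance} then makes each active follower's set $\mathcal{A}_i$ positively invariant on the interval. Hence $r_{ij}>2R_s$ on that interval, and the guidance law stays well defined because $|A_i|$ is bounded below. For active leaders, Theorem~\ref{thm:leader_guidance} applies unchanged, since leaders do not depend on followers.

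The next step is to argue continuation across events. At an exit event the departing agent's state is simply dropped. Every inter-agent constraint among the remaining agents holds at $s_k^-$, and positions are continuous, so the constraints also hold at $s_k$. At an entry event the new agent's state is admissible by hypothesis. Its $(z_1,z_2)$ with respect to its newly assigned out-neighbour therefore lies in $\mathcal{A}_i$, and this is the interpretation of ``states remain admissible'' that I will make explicit. Agents whose out-neighbour changed because of the event are in the same position as after an ordinary switch, which is the case covered in Theorem~\ref{Theorem:consensus_dynamic_connected}. Concatenating the solutions over the intervals, and using the absence of accumulation points, gives existence on $[0,\infty)$ together with forward invariance of the collision-free set.

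For convergence, the hypotheses say only finitely many events occur, so there is a time $T_e$ after which the node set is fixed. From $T_e$ onward the system is a time-varying graph with a fixed node set satisfying Assumption~\ref{assump:switching}, with limiting graph $\bar{\mathcal{G}}$ and settling time $T_s\ge T_e$. Applying Theorem~\ref{Theorem:consensus_dynamic_connected} from the initial time $T_e$ then gives circumnavigation of all active agents. Following the cascade argument in the proof of Theorem~\ref{thm:main_guidance} along directed paths of $\bar{\mathcal{G}}$, each follower attains the angular speed of its terminal leader.

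I expect the main obstacle to be justifying membership in $\mathcal{A}_i$ immediately after an event. The sublevel-set invariance from Theorem~\ref{thm:main_guidance} is relative to the pair $(i,j)$, so a new out-neighbour resets $V_i$ and may place $(z_1,z_2)$ outside $\mathcal{A}_i$. I would first handle this by reading ``states remain admissible at every event time'' as exactly this membership condition, in the same way Assumption~\ref{assump:switching} is used in the proof of Theorem~\ref{Theorem:consensus_dynamic_connected}. I would then note that collision safety between non-adjacent pairs is inherited from the geometric admissibility hypothesis rather than proven.
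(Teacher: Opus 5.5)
Your proposal is correct and follows the paper's argument. Node-entry and node-exit events are treated as additional discrete graph updates, admissibility at each event time lets the solution be continued without violating the collision-free constraint, and once the finitely many events are over, Theorem~\ref{Theorem:consensus_dynamic_connected} gives convergence. You merge the event and switching times and apply Theorem~\ref{thm:main_guidance} on each subinterval, whereas the paper applies Theorem~\ref{Theorem:consensus_dynamic_connected} between node events; together with your reading of admissibility as membership in $\mathcal{A}_i$, this only makes explicit what the paper leaves implicit, and your caveat about non-adjacent pairs applies equally to the paper's proof.
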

\begin{proof}
Each node-entry or node-exit event induces a discrete update of the interaction graph. Between two consecutive events, the graph is fixed and the argument of Theorem~\ref{Theorem:consensus_dynamic_connected} applies. Admissibility at event times guarantees that the solution can be continued after each update without violating the collision-free constraint. Since only finitely many such events occur before the graph becomes eventually static, asymptotic convergence follows from Theorem~\ref{Theorem:consensus_dynamic_connected}. Hence, proved.
\end{proof}

Thus, we have established a complete framework for circumnavigation with limited target information while avoiding collision. The following section provides simulation results to validate these theoretical findings.
\section{Simulation results}
\label{Sec:Simulaton}

To validate the proposed framework, numerical simulations are carried out for a stationary target located at the origin. We analyse three cases: first, for a static interaction topology; second, for a time-varying interaction topology with a fixed number of nodes; and third, for a time-varying interaction topology with a changing number of nodes. In all interaction graphs, leader and follower agents are shown in red and green, respectively.
\begin{figure}[ht]
    \centering
        \begin{subfigure}[b]{0.22\textwidth}
        \centering    
        \includegraphics[width=\linewidth]{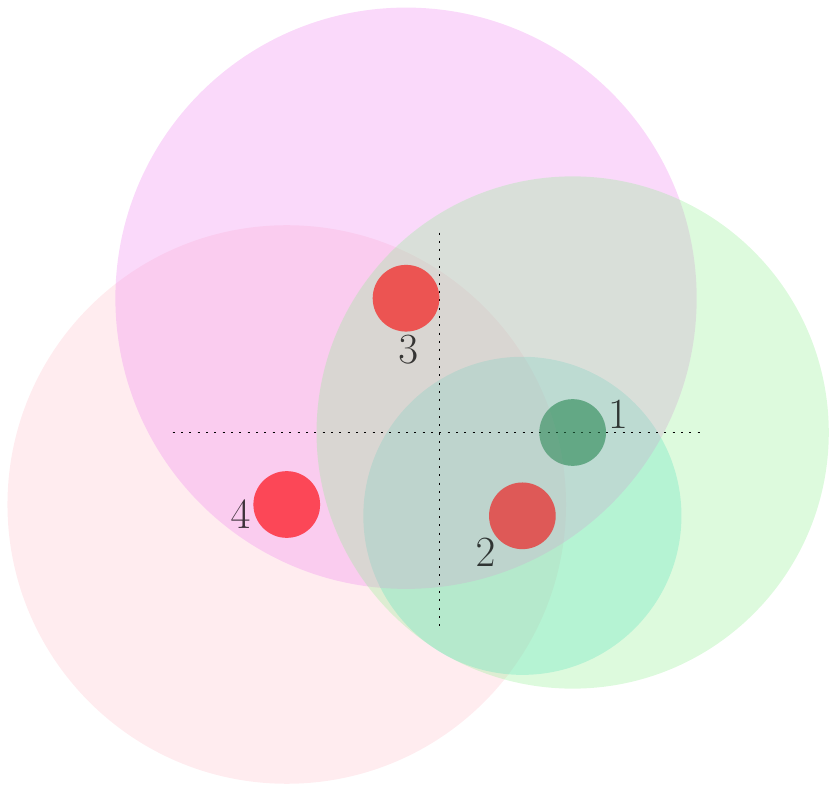}
        \caption{Distribution in $\mathbb{R}_2$}
        \label{fig:sensing_graph_sim}
    \end{subfigure}
    \hfill
    \begin{subfigure}[b]{0.22\textwidth}
        \centering
        \includegraphics[width=\linewidth]{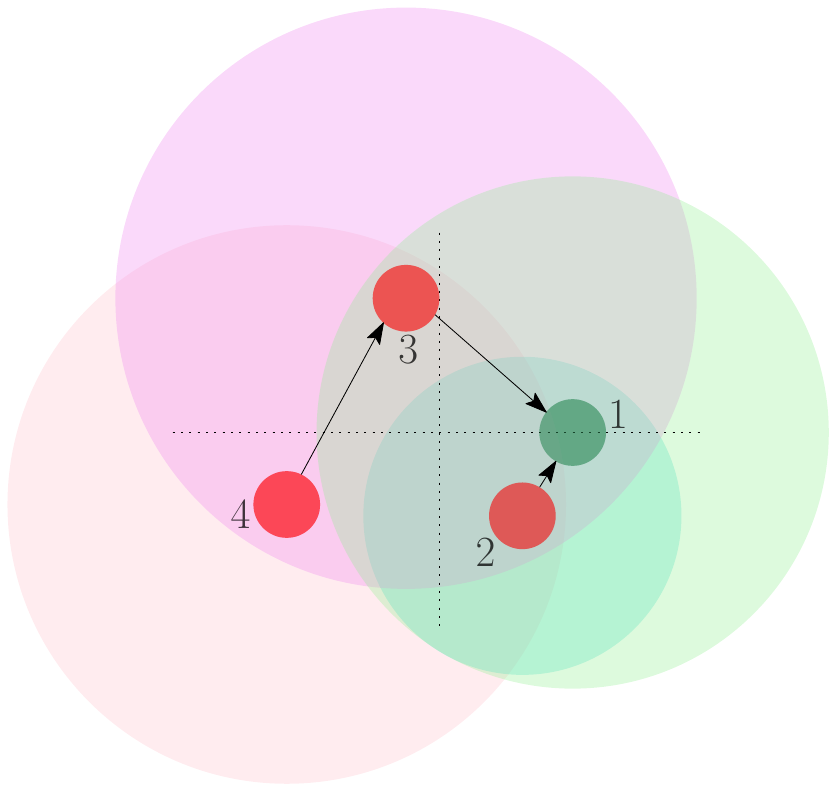} 
        \caption{Interaction graph $\mathcal{G}$}
        \label{fig:communication_graph_sim}
    \end{subfigure}
    \caption{Static interaction graph based on nearest neighbour selection rule at $t=0$}
    \label{fig:sense_to_communication_sim}
\end{figure}
\subsection*{Case $1$: Static interaction topology} 
\label{Case1}
Based on the nearest neighbour selection rule, the interaction topology is fixed at $t=0$, as illustrated in Fig.~\ref{fig:communication_graph_sim}. The initial conditions listed in Table~\ref{tab1} are deliberately selected to reflect heterogeneous agent parameters and initial configurations. To handle a more challenging scenario, the linear speeds of each in-neighbour are kept more than that of its out-neighbour. Also, the linear speeds are kept such that their final paths are sufficiently close.

\setlength{\abovedisplayskip}{2pt}
\setlength{\belowdisplayskip}{2pt}
\begin{table}[h]
\caption{Simulation's initial conditions (in SI units).}
\centering
\begin{tabular}{|c|c|c|c|c|c|c|}
\hline
Agent  & $P$ & $V$ & $\gamma$ & $R_s$ & Desired radii \\
\hline
1 & (8,0)     & 28 & -2.44 & 3 & 8 \\
\hline
2 & (5.1,-5.6) & 32 & -1.48 & 3 & 9.14 \\
\hline
3 & (-2,-8.3)  & 36 & 0.61  & 3 & 10.23 \\
\hline
4 & (-9.4,-4.5)  & 40 & -1.19 & 3 & 11.43 \\
\hline
\end{tabular}
\label{tab1}
\end{table}
The simulation results demonstrate that all follower agents synchronise with the leader's angular speed and converge to concentric circular paths around the target. The resulting trajectories are depicted in Fig.~\ref{fig:Case1_trajectory.pdf}. The evolution of radial distances, shown in Fig.~\ref{fig:Case1_distance_from_target.pdf}, confirms convergence to constant values, indicating successful attainment of circumnavigation.

Furthermore, the control inputs presented in Fig.~\ref{fig:Case1_control_inputs.pdf} converge to steady-state values, which substantiates that all agents eventually rotate with identical angular speeds. Safety properties are verified through Fig.~\ref{fig:Case1_all_pair_distances.pdf}, where the inter-agent distances remain strictly above the threshold value of $6$, indicated by the red dashed line. This observation confirms that collision avoidance is maintained throughout the evolution.


Next, we present the results for time-varying graphs.
\subsection*{Case $2$: Time-varying interaction topology with fixed number of nodes}
All parameters remain identical to Case $1$, except for the initial position of agent $4$, which is set to $(-9\text{m}, 0\text{m})$. Under the nearest-neighbour selection rule, the interaction topology evolves over time, leading to switching in the set of out-neighbours. The corresponding switching behaviour is illustrated in Fig.~\ref{fig:out_neighbour_switches_case_2}, where the plot is truncated at $t=5$s since no further changes occur.

\begin{figure*}[ht]
    \centering
    \includegraphics[width=\linewidth]{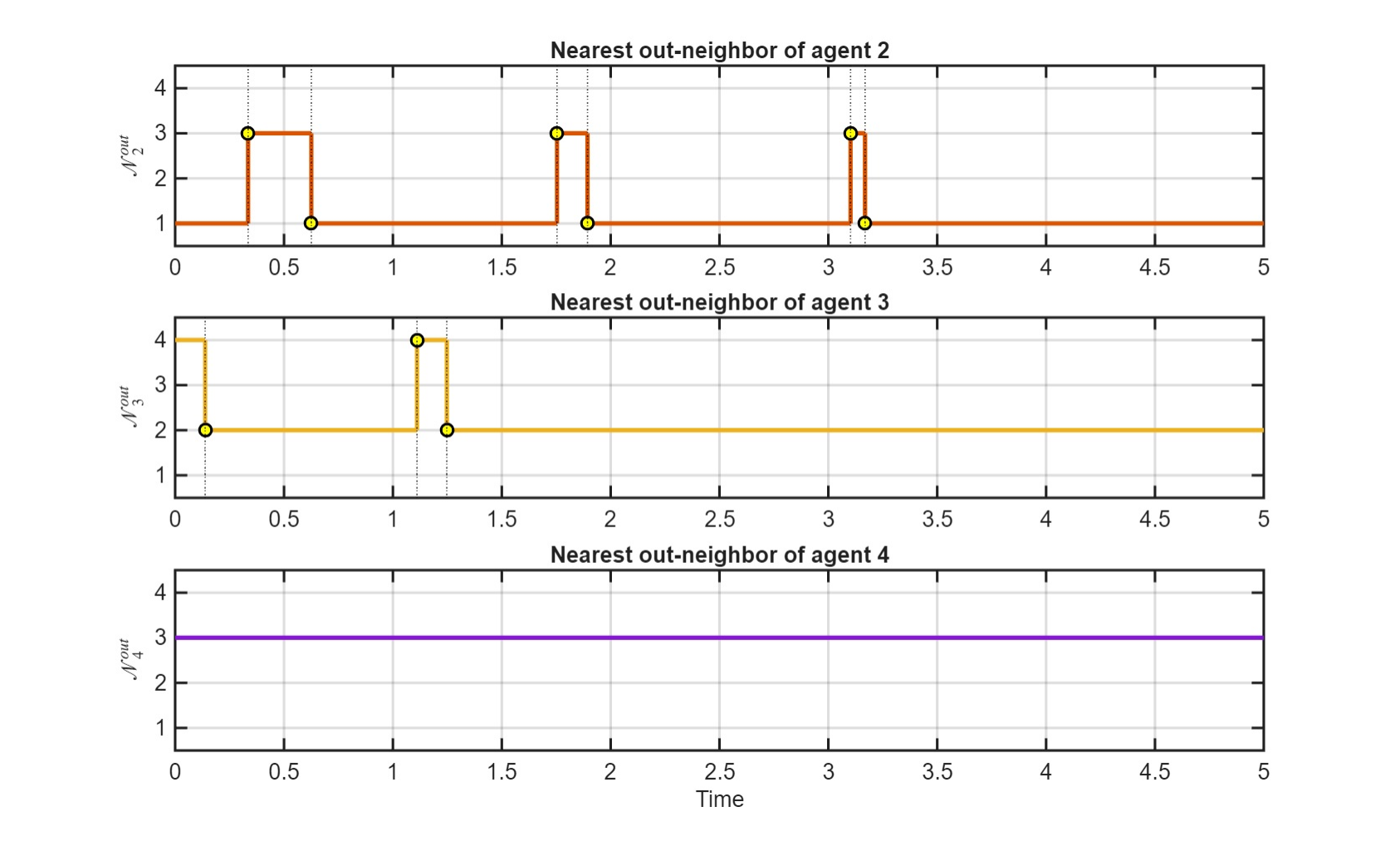}
    \caption{Nearest out-neighbours of all the followers till $t=5$s for case $2$}
    \label{fig:out_neighbour_switches_case_2}
\end{figure*}
\begin{figure*}[ht]
    \centering
    \includegraphics[width=\linewidth]{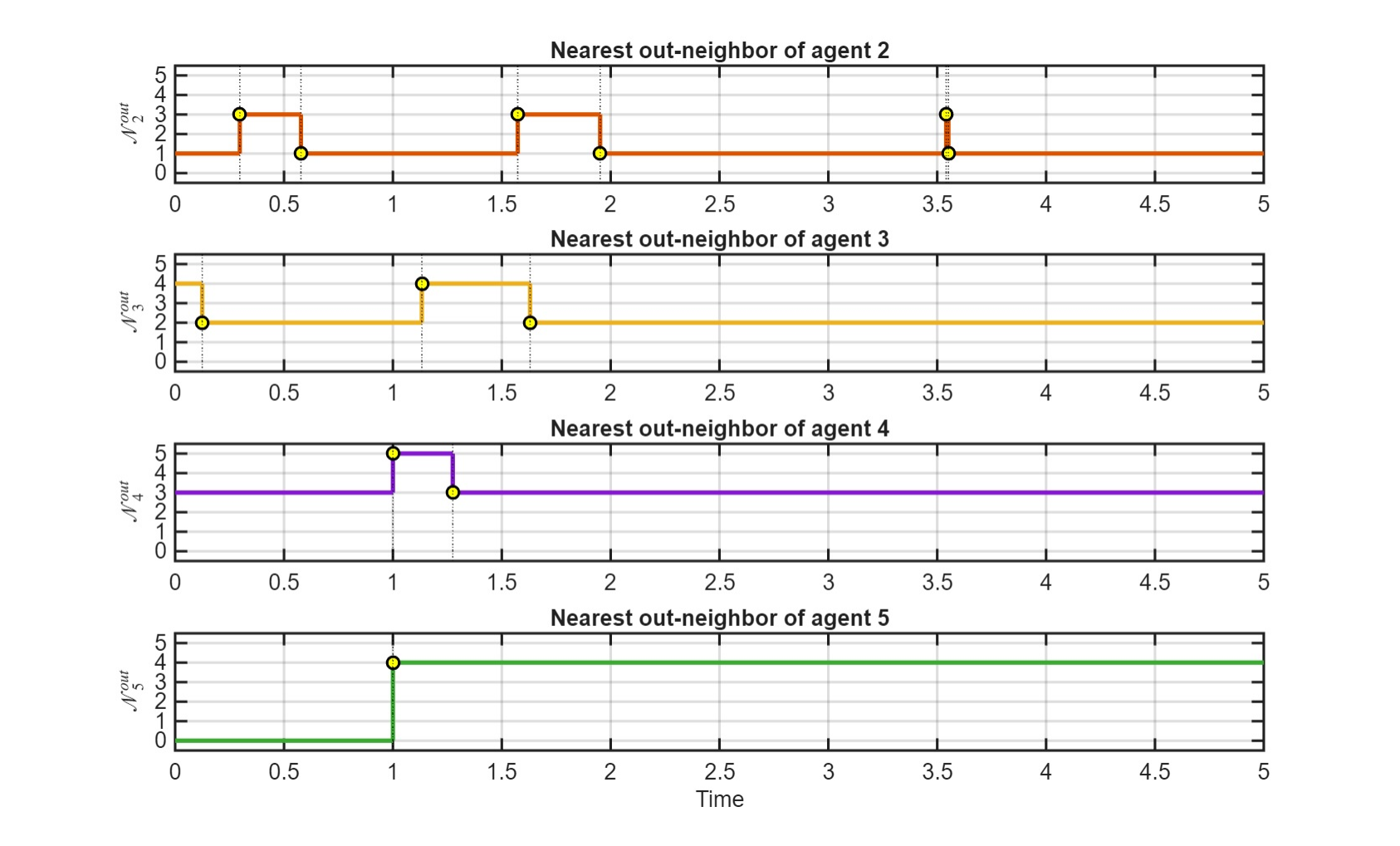}
    \caption{Nearest out-neighbours of all the followers till $t=5$s for case $3$}
    \label{fig:out_neighbour_switches_case_3.pdf}
\end{figure*}
Despite the time-varying nature of the interaction graph, all follower agents achieve synchronisation with the leader’s angular speed and converge to concentric circular trajectories around the target, provided the initial conditions lie within the admissible safe set. The resulting trajectories are shown in Fig.~\ref{fig:Trajectories_case_2.pdf}, while Fig.~\ref{fig:Distance_from_target_case_2.pdf} demonstrates convergence of radial distances to constant steady-state values.

The control inputs, depicted in Fig.~\ref{fig:Control_inputs_case_2.pdf}, stabilise over time, indicating convergence to a common angular speed. Safety constraints are consistently satisfied, as illustrated in Fig.~\ref{fig:Interagent_distances_case_2.pdf}, where all inter-agent distances remain strictly above the prescribed threshold of $6$.

After demonstrating that the proposed distributed guidance law ensured coordinated circumnavigation and collision avoidance even under switching interaction topologies with a fixed number of agents. Next, we do the same without fix number of agents.
\subsection*{Case $3$: Time-varying interaction topology without fixed number of nodes}
The initial conditions are identical to those in Case $2$. In addition, a new agent is introduced into the system at $t=1$s, resulting in a dynamically evolving node set alongside the time-varying interaction topology. Consequently, both the network structure and the set of participating agents change over time. The evolution of out-neighbour relationships is illustrated in Fig.~\ref{fig:out_neighbour_switches_case_3.pdf}, with switching activity ceasing after approximately $t=5$s.

The system continues to exhibit stable collective behaviour despite these structural changes. All agents synchronise with the leader’s angular speed and converge to concentric circular paths about the target, as shown in Fig.~\ref{fig:Trajectories_case_3.pdf}. The radial distances, plotted in Fig.~\ref{fig:Distance_from_the_target_case_3.pdf}, approach constant values, confirming circumnavigation.

The control inputs (Fig.~\ref{fig:Control_inputs_case_3.pdf}) converge to steady-state values, further indicating uniform rotational motion. Safety is preserved throughout the evolution, as demonstrated in Fig.~\ref{fig:Inter-agent_distances_case_3.pdf}, where all pairwise distances remain above the threshold value of $6$. The instant of agent insertion is marked by a vertical black line in the figure.
%
\subsection*{Case $4$: Static interaction topology with multiple leaders who can sense each other}
\setlength{\abovedisplayskip}{2pt}
\setlength{\belowdisplayskip}{2pt}
\begin{table}[h]
\caption{Simulation's initial conditions for case $4$ (in SI units).}
\centering
\begin{tabular}{|c|c|c|c|c|c|c|}
\hline
Agent  & $P$ & $V$ & $\gamma$ & $R_s$ & Desired radii \\
\hline
1 & (-20.4,0)     & 28 & -1.56 & 3 & 20 \\
\hline
2 & (9.35,0) & 32 & -1.58 & 3 & 9 \\
\hline
3 & (6.12,-6.78)  & 36 & 0.78  & 3 & 10.13 \\
\hline
4 & (-1.82,-9.47)  & 40 & -0.02& 3 & 11.25 \\
\hline
5 & (-10.09,-5.23)  &   44  & -0.82 &  3   & 12.37\\
\hline
\end{tabular}
\label{tab2}
\end{table}

The initial conditions for this case are listed in Table~\ref{tab2}. Unlike the previous scenarios, this setup considers the presence of multiple leaders that are mutually within each other’s sensing regions. In particular, agents $1$ and $2$ are leaders. According to Proposition~\ref{propo:leaders_less}, both leaders independently determine their respective desired radii and employ the guidance law developed in Theorem~\ref{thm:leader_guidance} to circumnavigate the target.

The simulation results demonstrate that all agents converge to concentric circular trajectories around the target, as illustrated in Fig.~\ref{fig:Trajectories_case_4.pdf}. The evolution of the radial distances from the target is shown in Fig.~\ref{fig:Distance_from_the_target_case_4.pdf}, where all distances converge to constant steady-state values, thereby confirming successful circumnavigation.

The corresponding control inputs are depicted in Fig.~\ref{fig:Control_inputs_case_4.pdf}. In contrast to the previous cases, the control inputs converge to two distinct steady-state values, reflecting the presence of two leaders in the network. Furthermore, Fig.~\ref{fig:Inter-agent_distances_case_4.pdf} confirms that all inter-agent distances remain strictly greater than the safety threshold $2R_s = 6$ throughout the evolution. Although not all pairwise distances converge to constant values in this case, owing to the absence of a directed path from the followers to the leader agent $1$, the collision avoidance requirement is consistently satisfied.

Overall, these results establish that the proposed control framework robustly guarantees coordinated circumnavigation and collision avoidance under both time-varying interactions and dynamically varying agent populations.
\section{Conclusion}
\label{Sec:Conclusion}
In this paper, we presented a distributed solution to the circumnavigation problem for a heterogeneous group of unicycle agents around a stationary target, with explicit collision-avoidance guarantees. The agents were modelled as disks rather than point masses in order to account for their physical dimensions. The proposed framework distinguishes between leaders, who know the target location, and followers, who rely only on local neighbour information. The guidance design acts only on the angular speeds, while the linear speeds are assumed constant and heterogeneous.

A barrier Lyapunov function-based guidance law was developed to enforce forward invariance of the collision-free set while driving the agents towards the desired circular formation. A key feature of the method is its minimal information requirement: each follower uses only the heading angle and LOS measurement associated with a designated out-neighbour. For static interaction graphs, we established asymptotic convergence for all admissible initial conditions. The framework was then extended to piecewise-constant time-varying interaction graphs and to node-entry/node-exit events. Numerical simulations demonstrated the effectiveness of the proposed approach in all three cases considered.

Future work will focus on strengthening the switching-graph analysis beyond piecewise static graphs, extending the framework to moving targets and environmental obstacles, and incorporating communication delays and sensing uncertainty into the design.

\bibliographystyle{ieeetran}
\bibliography{autosam}           
\end{document}